\newtheorem{theorem}{Theorem}
\newtheorem{remark}{Remark}
\newtheorem{proposition}{Proposition}
\newtheorem{corollary}{Corollary}
\renewcommand{\vec}[1]{\mathbf{#1}}
\def\blfootnote{\xdef\@thefnmark{}\@footnotetext}
\begin{document}
	
		\title{\huge{RIS-aided Secure Communications over Fisher-Snedecor $\mathcal{F}$ Fading Channels}} 
	\author{Farshad~Rostami~Ghadi\IEEEmembership{}, Wei-Ping Zhu, and Diego Mart\'in}
	\maketitle
\begin{abstract}
	\textcolor{blue}{In} this paper, we investigate the performance of physical layer security (PLS) over reconfigurable intelligent surfaces (RIS)-aided wireless communication systems, where all fading channels are modeled with Fisher-Snedecor $\mathcal{F}$ distribution. Specifically, we consider a RIS with $N$ reflecting elements between the transmitter and the legitimate receiver to develop a smart environment and also meliorate secure communications. In this regard, we derive the closed-form expressions for the secrecy outage probability (SOP) and average secrecy capacity (ASC). \textcolor{blue}{We also analyze the asymptotic behaviour of the SOP and ASC by exploiting the residue approach}. Monte-Carlo (MC) simulation results are provided throughout to validate the correctness of the developed analytical results, showing that considering RIS in wireless communication systems has constructive effects on the secrecy performance.
	
	\begin{IEEEkeywords}Physical layer security, reconfigurable intelligent surfaces, Fisher-Snedecor $\mathcal{F}$ fading channels, secrecy outage probability, average secrecy capacity
	\end{IEEEkeywords}
\end{abstract}

\section{Introduction}\label{sec1}
Reconfigurable intelligent surfaces (RISs) have been recently offered as a cost-effective technique to enhance reliability and improve the performance of future wireless communications \cite{basar2019wireless}. Specifically, RISs are artificial surfaces that are made of electromagnetic material and have a large number of passive reflecting elements, which can intelligently control the wireless propagation environment and ameliorate the received signal quality. On the other hand, the explosive growth in using wireless smart devices and also the broadcast nature of wireless channels have made momentous challenges for security and privacy in designing future wireless networks such as sixth-generation (6G) technology. One flexible approach to protect information from unauthorized access and guarantee secure communication is physical layer security (PLS). Given that PLS, which is widely referred to as the classic wiretap channel \cite{wyner1975wire}, leverages the physical characteristics of the propagation environment, accurate modeling of the statistical characteristics of fading channel coefficients is very important. To this end, the Fisher-Snedecor $\mathcal{F}$ distribution has been recently introduced in \cite{yoo2017fisher} to correctly model the combined effects of shadowing and multipath fading in wireless device-to-device (D2D) communications. The main advantage of Fisher-Snedecor $\mathcal{F}$ distribution is to provide a better tail matching of the empirical cumulative density function (CDF) for composite fading compared with the Generalized-$\mathcal{K}$ model. Therefore, since the empirical CDF tail represents deep fading, the Fisher-Snedecor $\mathcal{F}$ distribution is one of the best practical scheme for modeling fading channels in wireless networks. In addition, the probability density function (PDF) of Fisher-Snedecor $\mathcal{F}$ only consists of elementary functions with respect to the random variable (RV), thereby it can provide a more tractable analysis than the Generalized-$\mathcal{K}$ model. 

In recent years, intense research activities have been carried out on the role of RIS in PLS from various aspects \cite{tang2021novel,trigui2021secrecy}, however, there are limited works in the context of RIS-aided secure communications over different fading channels \cite{yang2020secrecy,ai2021secure,yadav2022secrecy,makarfi2020physical,zhang2021physical}. For instance, the secrecy performance of RIS-aided communications under Rayleigh fading channels was investigated in \cite{yang2020secrecy}, assuming the eavesdropper also can receive signals from the RIS. By deriving the closed-form expression for the secrecy outage probability (SOP), the authors showed that the system with the RIS and without the RIS-eavesdropper link has the best secrecy performance, which means that the secrecy performance becomes worse when the eavesdropper also utilizes the advantage of the RIS. In \cite{ai2021secure}, assuming there is no direct link between transmitter and legitimate receiver, the authors studied secure vehicle-to-vehicle (V2V) communications under Rayleigh fading and derived a closed-form expression for the SOP in terms of the bivariate Fox's H-function. Furthermore, the authors in \cite{yadav2022secrecy,makarfi2020physical,zhang2021physical} studied RIS-aided secure communication systems under Nakagami-$m$ and \textcolor{blue}{Fisher-Snedecor} $\mathcal{F}$ fading channels, respectively. \textcolor{blue}{In particular, the authors in \cite{yadav2022secrecy} obtained analytical expressions for the SOP and the ergodic secrecy capacity under Nakagami-$m$ fading channels.} The authors in \cite{makarfi2020physical} derived the SOP and the average secrecy capacity (ASC) under \textcolor{blue}{Fisher-Snedecor} $\mathcal{F}$ fading channels for a specific scenario, where only the transmitter uses a RIS-based access point (AP) for transmission and there is no RIS node between the transmitter and legitimate receiver. The authors in \cite{zhang2021physical} also analyzed SOP and ASC over cellular communications, where they used the stochastic geometry theory to generate the PDF and CDF of
the received signal-to-interference-plus-noise ratio (SINR) at each nodes. \textcolor{blue}{By proposing a simple heuristic algorithm to find the optimal phase shift of each RIS element, the authors in \cite{nguyen2021secrecy} analyzed the SOP and the secrecy rate for RIS-aided communication systems under Rician fading channels. Furthermore, the authors in \cite{wei2022secrecy} derived the closed-form expression of the ergodic secrecy capacity for a RIS-aided communication system with spatially random unmanned aerial vehicles (UAVs) acting as eavesdroppers, where the fading channels follow the Rician/Rayleigh distribution.}

Motivated by the potential of RIS and PLS technologies in performance improvement of future wireless communication systems and the significant role of Fisher-Snedecor $\mathcal{F}$ distribution in accurate modeling and characterization of the simultaneous occurrence of multipath fading and shadowing, in this paper, we considered a RIS-aided secure communication system over Fisher-Snedecor $\mathcal{F}$ fading channels. Specifically, \textcolor{blue}{in} contrast to previous works which used stochastic geometry to provide the distributions of received signal-to-noise ratios (SNRs) in terms of complicated multivariate Fox's H-function or considered a specific scenario in using RIS, we first provide analytical expressions for the PDF and the CDF of SNRs in terms of Gamma distribution, and then we \textcolor{blue}{derive} the SOP and the ASC in closed-form expressions. \textcolor{blue}{Furthermore, we provide the asymptotic analysis of the obtained secrecy metrics with high accuracy by exploiting the residue approach given in \cite{chergui2016performance}.} Eventually, Monte-Carlo (MC) simulation results validate the accuracy of the analytical results. \textcolor{blue}{The results indicate that considering the RIS in secure communications can improve the secrecy metrics performance. In addition, exploiting the Fisher-Snedecor $\mathcal{F}$ distribution for the fading channels provides a more accurate channel model and secrecy analysis compared with other fading distributions since it takes into account the channel gain, which consists of both signal power and channel characteristics.} To the best of the authors' knowledge, our analytical expressions are novel in the context of RIS-aided secure communication systems under Fisher-Snedecor $\mathcal{F}$ and can be beneficial in practical applications due to their tightness and simplicity.\vspace{0cm}

\textit{Notations:} $\mathbb{E}(.)$ and $\text{Var}(.)$ are the expectation and variance 
operators respectively, $\Gamma(.)$ and $\gamma\left(.,.\right)$ are the complete and incomplete Gamma functions respectively \cite[Eqs. (8.31) and (8.35)]{zwillinger2007table}, $B(.,.)$ is the Beta function \cite[Eq. (8.38)]{zwillinger2007table}, $G_{p,q}^{m,n}\left(.\right)$ and $G_{p,q:p_1,q_1,p_2,1_2}^{m,n:m_1,n_1:m_2,n_2}\left(.\right)$ are the univariate and bivariate Meijer's G-functions respectively \cite[Eq. (9.3)]{zwillinger2007table}, \cite{gupta1969integrals}, \textcolor{blue}{$H_{p,q}^{m,n}(.)$ is the univariate Fox's H-function \cite[Eq. (1.2)]{mathai2009h}},  $H_{p,q:u,v:e,f}^{m,n:s,t:i,j}\left(.\right)$ is the extended generalized bivariate Fox's H-function \cite[Eq. (2.56)]{mathai2009h}, \textcolor{blue}{$\mathrm{Res}\left[f(x),p\right]$ represents the residue of function $f(x)$ at pole $x=p$}.\vspace{-0.2cm}

\section{System Model}
\subsection{Channel Model}
We consider a secure wireless communication scenario as shown in Fig. \ref{fig-model}, where a legitimate transmitter (Alice) wants to send a confidential message to a legitimate receiver (Bob) aided by a RIS with $N$ reflecting elements, while a passive eavesdropper (Eve) attempts to decode the message sent by Alice. For simplicity, we assume that all nodes are only equipped with single antennas. It is also assumed that the channel state information (CSI) of the legitimate receiver is known to Alice, and thus the RIS can efficiently implement phase shifting in order to maximize the received SNR at Bob. The received signal reflected by the RIS at Bob and the received signal at Eve can be respectively expressed as:
\begin{align}
	Y_\mathrm{B}=\vec{g}_\mathrm{AR}^T\vec{\Phi}\vec{h}_\mathrm{RB}X+Z_\mathrm{B},\label{eq-yb}
\end{align}
\begin{align}
	Y_\mathrm{E}=\vec{g}_\mathrm{AR}^T\vec{\Phi}\vec{k}_\mathrm{RE}X+Z_\mathrm{E},\label{eq-ye}
\end{align}
where $X$ is the transmitted signal, $Z_\mathrm{B}$ and $Z_\mathrm{E}$ denote the additive white Gaussian noise (AWGN) with zero mean and variances $\sigma^2_\mathrm{B}$ and  $\sigma^2_\mathrm{E}$ at Bob and Eve respectively, and $\vec{\Phi}$ is the adjustable phase induced by the $n$th reflecting meta-surface of the RIS which defined as $\vec{\Phi}=\text{diag}\left(\left[\mathrm{e}^{j\phi_1}, \mathrm{e}^{j\phi_2}, ..., \mathrm{e}^{j\phi_N}\right]\right)$.
The vector $\vec{g}^T_\mathrm{AR}$ contains the channel gains from Alice to each element of the RIS, and vector $\vec{h}_\mathrm{RB}$ and $\vec{h}_\mathrm{RE}$ include the channel gains from each element of RIS to Bob and Eve, respectively, which are given by
$
\vec{g}_\mathrm{AR}=d_\mathrm{AR}^{-\alpha}.\left[g_1\mathrm{e}^{-j\psi_1}, g_2\mathrm{e}^{-j\psi_2},..., g_M\mathrm{e}^{-j\psi_N}\right]^T
$,
$
\vec{h}_\mathrm{RB}=d_{RB}^{-\alpha}.\left[h_1\mathrm{e}^{-j\theta_1}, h_2\mathrm{e}^{-j\theta_2},..., h_M\mathrm{e}^{-j\theta_N}\right]^T,
$
and 
$
\vec{h}_\mathrm{RE}=d_{RE}^{-\alpha}.\left[k_1\mathrm{e}^{-j\omega_1}, k_2\mathrm{e}^{-j\omega_2},..., k_M\mathrm{e}^{-j\omega_N}\right]^T,
$
where $d_\mathrm{AR}$ denotes the distance between Alice and the RIS, $d_\mathrm{RB}$ is the distance between the RIS and Bob, $d_\mathrm{RE}$ is the distance between the RIS and Eve, and $\alpha$ defines the path-loss exponent. The terms $g_n$, $h_n$, and $k_n$,   $n\in\{1,2,...,N\}$	are the amplitudes of the corresponding
channel gains, and $\mathrm{e}^{-j\psi_n}$, $\mathrm{e}^{-j\theta_n}$, and $\mathrm{e}^{-j\omega_n}$ denote the phase of
the respective links. We assume that all fading channel coefficients are modeled by the independent Fisher-Snedecor $\mathcal{F}$ distribution with fading parameters $(m_{s_i}, m_i)$ so that $m_{s_i}$ and $m_i$ represent the amount of shadowing of the root-mean-square (rms) signal power and the fading severity parameter, respectively.
\begin{figure}[!t]\vspace{0ex}
	\centering
	\includegraphics[width=0.45\textwidth]{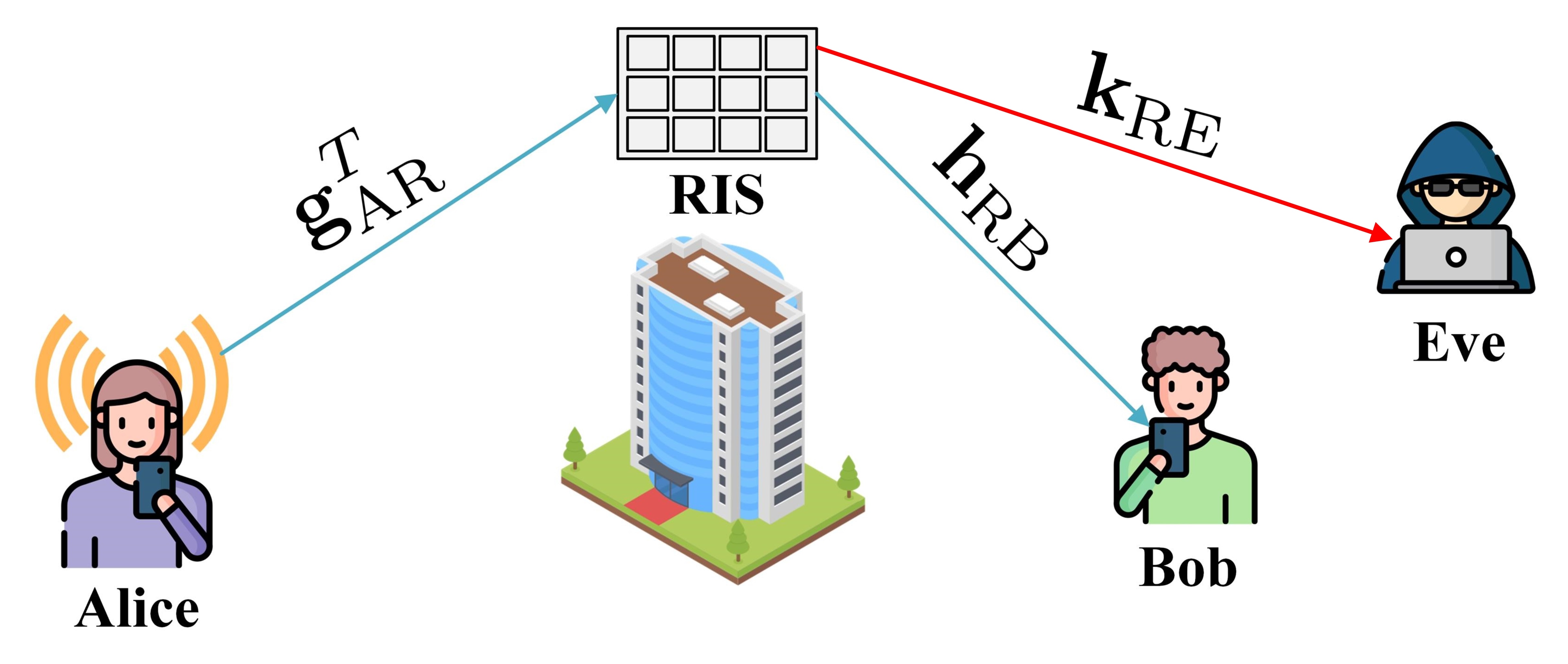} %\textwidth
	\caption{System model depicting the RIS-aided secure communications.} 
	\label{fig-model}
\end{figure}

\subsection{SNR Distribution}
\subsubsection{The main link} From \eqref{eq-yb}, the instantaneous SNR at Bob can be expressed as:\vspace{0cm}
\begin{align}
	\gamma_\mathrm{B}&=\frac{P\Big|\sum_{n=1}^{N}g_nh_n\mathrm{e}^{j\left(\phi_n-\psi_n-\theta_n\right)}\Big|^2}{\sigma_\mathrm{B}^2d_\mathrm{AR}^\alpha d_\mathrm{RB}^\alpha},\\
	&\overset{(a)}{=}\frac{P\Big|\sum_{n=1}^{N}g_nh_n\Big|^2}{\sigma_\mathrm{B}^2d_\mathrm{AR}^\alpha d_\mathrm{RB}^\alpha}=\bar{\gamma}_\mathrm{B}W,
\end{align}
where $\bar{\gamma}_\mathrm{B}$ is the average SNR at Bob and $(a)$ is obtained by considering perfect knowledge of the CSI at RIS, which provides the ideal phase shifting (i.e., $\phi_n=\psi_n+\theta_n$) for maximizing the SNR at Bob \cite{yang2020secrecy}. In the following theorem, we present the distributions of $\gamma_{\mathrm{B}}$, assuming that $g_n$ and $h_n$ follow independent Fisher-Snedecor $\mathcal{F}$ distribution.

\begin{theorem}\label{thm-pdf-cdf}
	Assuming all channels follow  Fisher-Snedecor $\mathcal{F}$ fading model, the PDF and the CDF of $\gamma_{\mathrm{B}}$ are respectively determined as:
	\begin{align}
		f_{\gamma_\mathrm{B}}(\gamma_\mathrm{B})=\frac{\gamma_\mathrm{B}^{(a-1)/2}}{2\bar{\gamma}_\mathrm{B}^{(a+1)/2}b^{a+1}\Gamma(a+1)}\mathrm{e}^{-\frac{\sqrt{\gamma_\mathrm{B}}}{b\sqrt{\bar{\gamma}_\mathrm{B}}}},\label{eq-pdf}
	\end{align}
	\begin{align}
		F_{\gamma_\mathrm{B}}(\gamma_\mathrm{B})=\frac{\gamma\left(a+1,\frac{\sqrt{\gamma_\mathrm{B}}}{b\sqrt{\bar{\gamma}_\mathrm{B}}}\right)}{\Gamma(a+1)},\label{eq-cdf}
	\end{align}
	where $\Omega_i$ for $i\in\left\{\mathrm{R},\mathrm{B}\right\}$ is the mean power, 
	\begin{align}\nonumber
		&	a=\frac{\left(N+1\right)\mathcal{B}^2-\mathcal{AC}}{\mathcal{AC}-\mathcal{B}^2},\quad b=\frac{\mathcal{D}\left(\mathcal{AC}-\mathcal{B}^2\right)}{\mathcal{BC}},\\\nonumber
		&\mathcal{A}=B\left(m_\mathrm{B}+1,m_{s_\mathrm{B}}-1\right)B\left(m_\mathrm{R}+1,m_{s_\mathrm{R}}-1\right),\\\nonumber
		&	\mathcal{B}=B\left(m_\mathrm{B}+\frac{1}{2},m_{s_\mathrm{B}}-\frac{1}{2}\right)B\left(m_\mathrm{R}+\frac{1}{2},m_{s_\mathrm{R}}-\frac{1}{2}\right)\\\nonumber 
		&\mathcal{C}=B\left(m_\mathrm{B},m_{s_\mathrm{B}}\right)B\left(m_\mathrm{R},m_{s_\mathrm{R}}\right), \text{and} \,\\ \nonumber &\mathcal{D}=\left(\frac{m_\mathrm{B}m_\mathrm{R}}{(m_{s_\mathrm{B}}-1)(m_{s_\mathrm{R}}-1)\Omega_\mathrm{B}\Omega_\mathrm{R}}\right)^{-\frac{1}{2}}.
	\end{align}
\end{theorem}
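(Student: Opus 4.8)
The plan is to characterize the law of $\gamma_\mathrm{B}=\bar{\gamma}_\mathrm{B}W$ by first approximating the composite variable $R=\sum_{n=1}^{N}g_nh_n$ by a Gamma distribution through moment matching, and then applying a change of variables. After the phase alignment in step $(a)$ the exponentials collapse to unity, so $R$ is a real positive sum of $N$ independent and identically distributed products $g_nh_n$ of independent Fisher-Snedecor $\mathcal{F}$ amplitudes, and $W=R^2$.

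First I would compute the low-order moments of $R$. Using the moment formula for the Fisher-Snedecor $\mathcal{F}$ amplitude, the $t$-th moment of each factor takes the form $\mathbb{E}[g_n^t]=\left(\Omega_\mathrm{R}(m_{s_\mathrm{R}}-1)/m_\mathrm{R}\right)^{t/2}B(m_\mathrm{R}+t/2,m_{s_\mathrm{R}}-t/2)/B(m_\mathrm{R},m_{s_\mathrm{R}})$, and analogously for $h_n$ with the subscript $\mathrm{B}$. By independence $\mathbb{E}[(g_nh_n)^t]$ factorizes, so evaluating at $t=1$ and $t=2$ reproduces precisely the Beta-function combinations $\mathcal{B}/\mathcal{C}$ and $\mathcal{A}/\mathcal{C}$, scaled by $\mathcal{D}$ and $\mathcal{D}^2$ respectively, once the common envelope-power factors are absorbed into $\mathcal{D}$. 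Summing the $N$ i.i.d. terms then yields $\mathbb{E}[R]=N\mu_1$ and $\mathrm{Var}(R)=N(\mu_2-\mu_1^2)$, where $\mu_1,\mu_2$ are the first two moments of a single product.

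Next I would match these to a Gamma law $R\sim\mathrm{Gamma}(a+1,b)$, whose shape and scale obey $a+1=\mathbb{E}[R]^2/\mathrm{Var}(R)$ and $b=\mathrm{Var}(R)/\mathbb{E}[R]$. Substituting the moments and cancelling the normalizing Beta products $\mathcal{C}$ and the scale factor $\mathcal{D}$ gives $a+1=N\mathcal{B}^2/(\mathcal{AC}-\mathcal{B}^2)$ and $b=\mathcal{D}(\mathcal{AC}-\mathcal{B}^2)/(\mathcal{BC})$, which rearrange into the stated $a$ and $b$. Applying the monotone map $\gamma_\mathrm{B}=\bar{\gamma}_\mathrm{B}R^2$ to the Gamma density $f_R(r)=r^{a}e^{-r/b}/(b^{a+1}\Gamma(a+1))$, with Jacobian $dr/d\gamma_\mathrm{B}=1/(2\sqrt{\bar{\gamma}_\mathrm{B}\gamma_\mathrm{B}})$, reproduces \eqref{eq-pdf}; the CDF \eqref{eq-cdf} then follows immediately by inserting $r=\sqrt{\gamma_\mathrm{B}/\bar{\gamma}_\mathrm{B}}$ into the Gamma CDF $\gamma(a+1,r/b)/\Gamma(a+1)$.

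The main obstacle is the moment-matching step itself: the Gamma fit to a sum of Fisher-Snedecor products is an approximation rather than an identity, so the genuine content of the theorem is its accuracy, which will be confirmed against the Monte-Carlo curves. The delicate algebraic part is verifying that, after cancelling $\mathcal{C}$ and $\mathcal{D}$, the moment-matched shape and scale collapse exactly into the quoted closed forms for $a$ and $b$; getting the Fisher-Snedecor amplitude parameterization right (with $\Omega_i=\mathbb{E}[\,\cdot\,^2]$, hence the $(m_{s_i}-1)$ factors inside $\mathcal{D}$) is what makes this cancellation work.
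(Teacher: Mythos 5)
Your proposal follows essentially the same route as the paper's proof: the paper likewise approximates the sum $\sum_{n}G_nH_n$ of i.i.d.\ products of Fisher-Snedecor $\mathcal{F}$ variates by a moment-matched Gamma density (phrased there as retaining the first term of a Laguerre expansion, with the product's mean and variance quoted from the literature rather than derived from the amplitude moment formula), and then applies the same square-law change of variables $\gamma_\mathrm{B}=\bar{\gamma}_\mathrm{B}W^2$ and integration to obtain \eqref{eq-pdf} and \eqref{eq-cdf}. Your explicit acknowledgement that the Gamma fit is an approximation validated by Monte-Carlo, and your identification of $a+1=N\mathcal{B}^2/(\mathcal{AC}-\mathcal{B}^2)$ and $b=\mathcal{D}(\mathcal{AC}-\mathcal{B}^2)/(\mathcal{BC})$, match the paper's derivation exactly.
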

\begin{proof}
	The details of proof are in Appendix \ref{app-pdf}. 
\end{proof}
\subsubsection{The eavesdropping link}
For the eavesdropper link, Eve receives signals from the RIS. Hence, from \eqref{eq-ye}, the instantaneous SNR at Eve can be expressed as:
\begin{align}
	\gamma_\mathrm{E}&=\frac{P\Big|\sum_{n=1}^{N}g_nk_n\mathrm{e}^{j\left(\phi_n-\psi_n-\omega_n\right)}\Big|^2}{\sigma_\mathrm{E}^2d_\mathrm{AR}^\alpha d_\mathrm{RE}^\alpha},\\
	&\overset{(a)}{=}\frac{P\Big|\sum_{n=1}^{N}g_nk_n\Big|^2}{\sigma_\mathrm{B}^2d_\mathrm{AR}^\alpha d_\mathrm{RE}^\alpha}=\bar{\gamma}_\mathrm{E}Q,
\end{align}
where $\bar{\gamma}_\mathrm{E}$ is the average SNR. Similarly, assuming that $g_n$ and $k_n$ follow independent Fisher-Snedecor $\mathcal{F}$ distribution, the distributions of $\gamma_\mathrm{E}$ can be determined as the following corollary. 
\begin{corollary}
	Assuming all channels follow  Fisher-Snedecor $\mathcal{F}$ fading model, the PDF and the CDF of $\gamma_{\mathrm{E}}$ are respectively determined as:
	\begin{align}
		f_{\gamma_\mathrm{E}}(\gamma_\mathrm{E})=\frac{\gamma_\mathrm{E}^{(a'-1)/2}}{2\bar{\gamma}_\mathrm{E}^{(a'+1)/2}b'^{a+1}\Gamma(a'+1)}\mathrm{e}^{-\frac{\sqrt{\gamma_\mathrm{E}}}{b'\sqrt{\bar{\gamma}_\mathrm{E}}}},\label{eq-pdf2}
	\end{align}
	\begin{align}
		F_{\gamma_\mathrm{E}}(\gamma_\mathrm{E})=\frac{\gamma\left(a'+1,\frac{\sqrt{\gamma_\mathrm{E}}}{b'\sqrt{\bar{\gamma}_\mathrm{E}}}\right)}{\Gamma(a'+1)},\label{eq-cdf2}
	\end{align}
	where $\Omega_j$ for $j\in\left\{\mathrm{R},\mathrm{B}\right\}$ is the mean power, 
	\begin{align}\nonumber
		&	a'=\frac{\left(N+1\right)\mathcal{B'}^2-\mathcal{A'C'}}{\mathcal{A'C'}-\mathcal{B'}^2},\quad b'=\frac{\mathcal{D'}\left(\mathcal{A'C'}-\mathcal{B'}^2\right)}{\mathcal{B'C'}},\\\nonumber &\mathcal{A'}=B\left(m_\mathrm{E}+1,m_{s_\mathrm{E}}-1\right)B\left(m_\mathrm{R}+1,m_{s_\mathrm{R}}-1\right),\\\nonumber
		&	\mathcal{B'}=B\left(m_\mathrm{E}+\frac{1}{2},m_{s_\mathrm{E}}-\frac{1}{2}\right)B\left(m_\mathrm{R}+\frac{1}{2},m_{s_\mathrm{R}}-\frac{1}{2}\right),\\\nonumber
		&\mathcal{C'}=B\left(m_\mathrm{E},m_{s_\mathrm{E}}\right)B\left(m_\mathrm{R},m_{s_\mathrm{R}}\right), \text{and}\\\nonumber &\mathcal{D'}=\left(\frac{m_\mathrm{E}m_\mathrm{R}}{(m_{s_\mathrm{E}}-1)(m_{s_\mathrm{R}}-1)\Omega_\mathrm{E}\Omega_\mathrm{R}}\right)^{-\frac{1}{2}}.
	\end{align}
\end{corollary}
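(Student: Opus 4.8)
The plan is to obtain this corollary as a direct consequence of Theorem~\ref{thm-pdf-cdf}, exploiting the fact that the eavesdropping SNR $\gamma_\mathrm{E}=\bar{\gamma}_\mathrm{E}Q$ with $Q=\big|\sum_{n=1}^{N}g_nk_n\big|^2$ is structurally identical to the legitimate SNR $\gamma_\mathrm{B}=\bar{\gamma}_\mathrm{B}W$ with $W=\big|\sum_{n=1}^{N}g_nh_n\big|^2$. The only difference is that the RIS-to-Bob amplitudes $h_n$, with Fisher-Snedecor $\mathcal{F}$ parameters $(m_\mathrm{B},m_{s_\mathrm{B}})$ and mean power $\Omega_\mathrm{B}$, are replaced by the RIS-to-Eve amplitudes $k_n$, with parameters $(m_\mathrm{E},m_{s_\mathrm{E}})$ and mean power $\Omega_\mathrm{E}$, while the common Alice-to-RIS amplitudes $g_n$ are unchanged. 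Consequently, the derivation of Appendix~\ref{app-pdf} transfers verbatim under the substitution $(m_\mathrm{B},m_{s_\mathrm{B}},\Omega_\mathrm{B})\mapsto(m_\mathrm{E},m_{s_\mathrm{E}},\Omega_\mathrm{E})$.

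Concretely, I would first record that, by the mutual independence of all fading coefficients, the products $g_nk_n$ are independent across $n$ and each factorizes as $\mathbb{E}[(g_nk_n)^t]=\mathbb{E}[g_n^t]\,\mathbb{E}[k_n^t]$. The required amplitude moments of a Fisher-Snedecor $\mathcal{F}$ variate are expressible through Beta functions, so the first and second moments of $g_nk_n$ produce exactly the primed quantities $\mathcal{A}',\mathcal{B}',\mathcal{C}',\mathcal{D}'$, which are the unprimed expressions of Theorem~\ref{thm-pdf-cdf} with Bob's parameters replaced by Eve's. I would then apply the same moment-matching step used for $W$: the sum $R'=\sum_{n=1}^{N}g_nk_n$ is approximated by a Gamma random variable whose shape $a'+1$ and scale $b'$ are fixed by equating its mean and variance to those of $R'$, which reproduces the stated $a'$ and $b'$.

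Finally, since $\gamma_\mathrm{E}=\bar{\gamma}_\mathrm{E}R'^2$ is a monotone squaring map, the PDF \eqref{eq-pdf2} follows from the Gamma density by the change of variables $r=\sqrt{\gamma_\mathrm{E}/\bar{\gamma}_\mathrm{E}}$ with Jacobian $1/(2\sqrt{\bar{\gamma}_\mathrm{E}\gamma_\mathrm{E}})$, and the CDF \eqref{eq-cdf2} follows by integrating this density, yielding the regularized lower incomplete Gamma function $\gamma(a'+1,\sqrt{\gamma_\mathrm{E}}/(b'\sqrt{\bar{\gamma}_\mathrm{E}}))/\Gamma(a'+1)$.

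Because every step mirrors the legitimate-link analysis exactly, there is no genuine analytical obstacle; the only point requiring care is bookkeeping, namely ensuring that each Beta-function factor inherits Eve's fading parameters in the correct slot so that $\mathcal{A}',\mathcal{B}',\mathcal{C}',\mathcal{D}'$ are assembled correctly. I would therefore keep the written proof short, stating that it parallels Theorem~\ref{thm-pdf-cdf} and indicating only the parameter substitution rather than repeating the full derivation.
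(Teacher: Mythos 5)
Your proposal is correct and follows essentially the same route as the paper: the authors likewise prove the corollary by observing that the eavesdropping link is structurally identical to the main link and simply invoking the Appendix~\ref{app-pdf} derivation (product of two independent Fisher-Snedecor $\mathcal{F}$ variates, moment matching to a Gamma density via the first term of the Laguerre expansion, then the squaring change of variables) under the substitution $(m_\mathrm{B},m_{s_\mathrm{B}},\Omega_\mathrm{B})\mapsto(m_\mathrm{E},m_{s_\mathrm{E}},\Omega_\mathrm{E})$.
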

\begin{proof}
	The proof can directly obtained form Appendix \ref{app-pdf}.
\end{proof}

\section{Secrecy Performance Analysis}
In this section, we derive closed-form expressions of the SOP and the ASC for the considered system model. \vspace{0cm}
\subsection{SOP analysis}
The SOP is an information-theoretical metric to evaluate the performance of physical layer security, which is defined as the probability that the instantaneous secrecy capacity $C_\mathrm{s}$ is less than a target secrecy rate $R_\mathrm{s}>0$, $P_{\mathrm{sop}}=\Pr\left(C_\mathrm{s}\leq R_\mathrm{s}\right)$,  
%Thus, the SOP can be defined as:
%\begin{align}
%P_{\mathrm{sop}}=\Pr\left(C_\mathrm{s}\leq R_\mathrm{s}\right)\label{eq-def-sop},
%\end{align}
where $C_\mathrm{s}$ for the considered system model can be defined as:
\begin{align}
	C_\mathrm{s}\left(\gamma_{\mathrm{B}},\gamma_{\mathrm{E}}\right)=\max\left\{\ln\left(1+\gamma_{\mathrm{B}}\right)-\ln\left(1+\gamma_{\mathrm{E}}\right),0\right\}.\label{eq-sc}
\end{align}
Now, by inserting \eqref{eq-sc} into SOP definition, $P_{\mathrm{sop}}$ for the considered model is mathematically expressed as:
\begin{align}
	P_{\mathrm{sop}}&=\Pr\left(\ln\left(\frac{1+\gamma_\mathrm{B}}{1+\gamma_\mathrm{E}}\right)\leq R_\mathrm{s}\right),\\
	%&=\Pr\left(\gamma_\mathrm{B}\leq(1+\gamma_\mathrm{E})\mathrm{e}^{R_\mathrm{s}}-1\right),\\
	%&=\int_{0}^\infty\int_0^{\gamma_\mathrm{t}}f_{\gamma_\mathrm{B}}(\gamma_\mathrm{B})f_{\gamma_\mathrm{E}}(\gamma_\mathrm{E})d\gamma_\mathrm{B} d\gamma_\mathrm{E}\\
	&=\int_0^\infty F_{\gamma_\mathrm{B}}\left(\gamma_\mathrm{t}\right)f_{\gamma_\mathrm{E}}(\gamma_\mathrm{E})d\gamma_\mathrm{E},\label{eq-sop1}
\end{align}
where $\gamma_\mathrm{t}=\left(1+\gamma_\mathrm{E}\right)\mathrm{e}^{R_\mathrm{s}}-1=\gamma_\mathrm{E}R_\mathrm{t}+R_\mathrm{t}-1=\gamma_\mathrm{E}R_\mathrm{t}+\bar{R}_\mathrm{t}$.
\begin{theorem}\label{thm-sop}
	The SOP for the considered RIS-aided secure communication system under Fisher-Snedecor $\mathcal{F}$ fading channels is given by 
	\begin{align}
		\hspace{-0.2cm}	P_\mathrm{sop}=\mathcal{E}H_{1,0:2,2:1,1}^{0,1:1,1:1,1}\left(
		\begin{array}{c}
			\frac{b\sqrt{\bar{\gamma}_\mathrm{B}}}{\sqrt{\bar{R}_\mathrm{t}}},\frac{b'^2R_\mathrm{t}\bar{\gamma}_\mathrm{E}}{\bar{R}_\mathrm{t}}\\
		\end{array}\Bigg\vert
		\begin{array}{c}
			\mathcal{I}_1\\
			-\\
		\end{array}\Bigg\vert
		\begin{array}{c}
			\mathcal{I}_2\\
			\mathcal{J}_2
		\end{array}\Bigg\vert
		\begin{array}{c}
			\mathcal{I}_3\\
			\mathcal{J}_3
		\end{array}
		\right), \label{eq-sop}
	\end{align}
	where \textcolor{blue}{$\mathcal{E}=\frac{\bar{R}_\mathrm{t}b'^{-2}}{R_\mathrm{t}\bar{\gamma}_\mathrm{E}\Gamma\left(a'+1\right)}$}, $\mathcal{I}_1=\left(2;\frac{1}{2},1\right)$, $\mathcal{I}_2=\left(-a,1\right),\left(1,1\right)$, $\mathcal{I}_3=\left(2-a',2\right)$, $\mathcal{J}_2=	\left(0,1\right), \left(1,\frac{1}{2}\right)$, and $\mathcal{J}_3=\left(1,1\right)$.
	%	
	%	\begin{align}
	%\hspace{-0.2cm}	P_\mathrm{sop}=H_{1,0:1,2:1,2}^{0,1:1,1:2,1}\left(
	%	\begin{array}{c}
	%		\frac{b\sqrt{\bar{\gamma}_\mathrm{B}}}{\sqrt{\bar{R}_\mathrm{t}}},\frac{R_\mathrm{t}}{\bar{R}_\mathrm{t}\lambda_\mathrm{E}}\\
	%	\end{array}\Bigg\vert
	%	\begin{array}{c}
	%	\mathcal{I}_1\\
	%		-\\
	%	\end{array}\Bigg\vert
	%	\begin{array}{c}
	%		\mathcal{I}_2\\
	%		\mathcal{J}_2
	%	\end{array}\Bigg\vert
	%	\begin{array}{c}
	%		\mathcal{I}_3\\
	%		\mathcal{J}_3
	%	\end{array}
	%	\right), \label{eq-sop}
	%	\end{align}
	%where $\mathcal{I}_1=\left(2;\frac{1}{2},1\right)$, $\mathcal{I}_2=\left(-a,1\right)$, $\mathcal{I}_3=\left(2-m_\mathrm{E},1\right)$, $\mathcal{J}_2=	\left(0,1\right)$, and $\mathcal{J}_3=\left(1,1\right),\left(1+m_{s_\mathrm{E}},1\right)$.
\end{theorem}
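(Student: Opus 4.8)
The plan is to begin from the single integral \eqref{eq-sop1}, insert $\gamma_\mathrm{t}=R_\mathrm{t}\gamma_\mathrm{E}+\bar R_\mathrm{t}$ into the CDF \eqref{eq-cdf} and multiply by the PDF \eqref{eq-pdf2}. The difficulty is that the lower incomplete Gamma function $\gamma(a+1,\cdot)$ carries the composite argument $\frac{\sqrt{R_\mathrm{t}\gamma_\mathrm{E}+\bar R_\mathrm{t}}}{b\sqrt{\bar\gamma_\mathrm{B}}}$, so the integration variable $\gamma_\mathrm{E}$ appears nonlinearly under a square root inside a special function, and no direct table integral applies. My first step is therefore to replace $\gamma(a+1,\cdot)$ by its Mellin--Barnes (Fox $H$) representation, introducing a contour variable $s$ so that the argument is lifted to the power $-s$, i.e. $(R_\mathrm{t}\gamma_\mathrm{E}+\bar R_\mathrm{t})^{-s/2}$ up to constants.

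Next I would factor $(R_\mathrm{t}\gamma_\mathrm{E}+\bar R_\mathrm{t})^{-s/2}=\bar R_\mathrm{t}^{-s/2}\bigl(1+\tfrac{R_\mathrm{t}}{\bar R_\mathrm{t}}\gamma_\mathrm{E}\bigr)^{-s/2}$ and expand the binomial factor by a second Mellin--Barnes integral, with contour variable $t$, using the standard representation of $(1+y)^{-\rho}$. This decouples $\gamma_\mathrm{E}$ from the constants: the only $\gamma_\mathrm{E}$-dependence left in the integrand is the monomial $\gamma_\mathrm{E}^{(a'-1)/2-t}$ coming from the binomial and the power term of the PDF, times the exponential $\mathrm{e}^{-\sqrt{\gamma_\mathrm{E}}/(b'\sqrt{\bar\gamma_\mathrm{E}})}$. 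After interchanging the order of integration (justified by absolute convergence on the chosen vertical contours), the inner $\gamma_\mathrm{E}$-integral becomes elementary: the substitution $u=\sqrt{\gamma_\mathrm{E}}$ turns it into the Gamma integral $\int_0^\infty u^{a'-2t}\mathrm{e}^{-u/(b'\sqrt{\bar\gamma_\mathrm{E}})}\,du$, which evaluates to $\Gamma(a'+1-2t)\,(b'\sqrt{\bar\gamma_\mathrm{E}})^{a'+1-2t}$.

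At this point the $\gamma_\mathrm{E}$-free constants collapse; in particular the factor $(b'\sqrt{\bar\gamma_\mathrm{E}})^{a'+1}$ and the $\tfrac12$ cancel against the normalization of \eqref{eq-pdf2}, leaving a double contour integral in $s$ and $t$ whose integrand is a ratio of products of Gamma functions: the $s$-only factors $\Gamma(a+1+s)\Gamma(-s)/[\Gamma(1-s)\Gamma(s/2)]$ from the incomplete Gamma and the binomial, the coupling factor $\Gamma(\tfrac{s}{2}-t)$, and the $t$-only factors $\Gamma(t)\Gamma(a'+1-2t)$, all multiplied by $X_1^{\,s}X_2^{\,-t}$ with $X_1=\frac{b\sqrt{\bar\gamma_\mathrm{B}}}{\sqrt{\bar R_\mathrm{t}}}$ and $X_2=\frac{b'^2R_\mathrm{t}\bar\gamma_\mathrm{E}}{\bar R_\mathrm{t}}$, which are exactly the two arguments appearing in \eqref{eq-sop}. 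The final step is to read this off as the extended generalized bivariate Fox $H$-function: the coupling Gamma yields $\mathcal{I}_1=(2;\tfrac12,1)$, the $s$-block yields $\mathcal{I}_2,\mathcal{J}_2$, and the $t$-block yields $\mathcal{I}_3,\mathcal{J}_3$, while a relabelling $t\mapsto t-1$ of the second contour shifts $\Gamma(t)\Gamma(a'+1-2t)$ into the block parameters $(2-a',2)$ and $(1,1)$ and pulls out a factor $X_2^{-1}$ that is absorbed into the prefactor $\mathcal{E}$. I expect the main obstacle to be precisely this bookkeeping: representing the nested $\sqrt{R_\mathrm{t}\gamma_\mathrm{E}+\bar R_\mathrm{t}}$ by two coupled Mellin--Barnes integrals and then matching signs, contour shifts and the coefficient conventions of the bivariate $H$-function so that the five parameter tuples and the constant $\mathcal{E}$ come out exactly as stated.
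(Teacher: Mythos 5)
Your proposal is correct and arrives at the same double Mellin--Barnes integral that the paper identifies with the bivariate Fox $H$-function; the only difference is organizational. The paper first applies the Mellin--Parseval relation to the product $F_{\gamma_\mathrm{B}}(\gamma_\mathrm{t})f_{\gamma_\mathrm{E}}(\gamma_\mathrm{E})$, which introduces the contour variable $s$, and then obtains the second contour variable $\zeta$ from the Meijer-$G$ representation of the incomplete Gamma function; the composite term $(\gamma_\mathrm{E}R_\mathrm{t}+\bar R_\mathrm{t})^{-\zeta/2}$ is then disposed of with the table Beta integral \cite[Eq.\ (3.194.3)]{zwillinger2007table}. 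You instead introduce both contour variables up front --- one for the incomplete Gamma, one for the binomial $(1+\tfrac{R_\mathrm{t}}{\bar R_\mathrm{t}}\gamma_\mathrm{E})^{-s/2}$ --- so that the remaining $\gamma_\mathrm{E}$-integral is a plain Gamma integral. These are the same computation in different clothing: the Beta integral the paper invokes \emph{is} the Mellin transform of the binomial, and your integrand coincides with the paper's Eq.\ \eqref{eq-app2} under the affine change of the second contour variable that you flag at the end (which is exactly where the factor absorbed into $\mathcal{E}$ comes from). Your version is marginally more self-contained (no appeal to the Parseval relation or to the table integral), while the paper's is slightly shorter on bookkeeping since one of its two Mellin transforms, $\mathcal{M}[f_{\gamma_\mathrm{E}},s]$, is evaluated in closed form in one line; either way the final identification of the five parameter tuples from the Gamma-product is the step that requires care, as you correctly anticipate.
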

\begin{proof}
	The details of proof are in Appendix \ref{app1}.
\end{proof}
\subsection{ASC analysis}
Here, we define the ASC for the system model under consideration, as:
\begin{align}
	\bar{C}_\mathrm{s}\overset{\Delta}{=}\int_{0}^\infty\int_{0}^\infty C_\mathrm{s}\left(\gamma_\mathrm{B},\gamma_\mathrm{E}\right)f_{\gamma_\mathrm{B}}\left(\gamma_\mathrm{B}\right)f_{\gamma_\mathrm{E}}\left(\gamma_\mathrm{E}\right)d\gamma_\mathrm{B}d\gamma_\mathrm{E}.\label{eq-def-sc}
\end{align}
\begin{theorem}\label{thm-asc}
	The ASC for the considered RIS-aided secure communication system under Fisher-Snedecor $\mathcal{F}$ fading channels is given by 
	\begin{align}\nonumber
		&\bar{C}_\mathrm{s}=\mathcal{H}H_{1,0:1,2:2,2}^{0,1:1,1:1,2}\left(
		\begin{array}{c}
			\frac{b\sqrt{\bar{\gamma}_\mathrm{B}}}{b'\sqrt{\bar{\gamma}_\mathrm{E}}},b^2\bar{\gamma}_\mathrm{B}\\
		\end{array}\Bigg\vert
		\begin{array}{c}
			\mathcal{U}_1\\
			-\\
		\end{array}\Bigg\vert
		\begin{array}{c}
			\mathcal{U}_2\\
			\mathcal{V}_2
		\end{array}\Bigg\vert
		\begin{array}{c}
			\mathcal{U}_3\\
			\mathcal{V}_3
		\end{array}
		\right)\\\nonumber
		&+\mathcal{H}H_{1,0:1,2:2,2}^{0,1:1,1:1,2}\left(
		\begin{array}{c}
			\frac{b'\sqrt{\bar{\gamma}_\mathrm{E}}}{b\sqrt{\bar{\gamma}_\mathrm{B}}},b^2\bar{\gamma}_\mathrm{E}\\
		\end{array}\Bigg\vert
		\begin{array}{c}
			\mathcal{W}_1\\
			-\\
		\end{array}\Bigg\vert
		\begin{array}{c}
			\mathcal{W}_2\\
			\mathcal{X}_2
		\end{array}\Bigg\vert
		\begin{array}{c}
			\mathcal{W}_3\\
			\mathcal{X}_3
		\end{array}
		\right)\\
		&-\mathcal{O} G_{3,2}^{1,3}\left(
		\begin{array}{c}
			4b'^2\bar{\gamma}_\mathrm{E}\\
		\end{array}\Bigg\vert
		\begin{array}{c}
			(-\frac{a'}{2},1,1)\\
			(1,0)\\
		\end{array}\right),\label{eq-asc}
	\end{align}
	where $\mathcal{H}=\frac{1}{\Gamma(a+1)\Gamma(a'+1)\ln 2}$, $\mathcal{O}=\frac{2^{\frac{a'-2}{2}}}{\sqrt{2\pi}\Gamma\left(a'+1\right)\ln 2}$, $\mathcal{U}_1=		\left(-a;1,2\right)$, $\mathcal{U}_2=\mathcal{W}_2=\left(1,1\right)$, $\mathcal{U}_3=\mathcal{W}_3=\left(1,1\right),\left(1,1\right)$, $\mathcal{V}_2=	\left(a'+1,1\right),\left(0,1\right)$,  $\mathcal{V}_3=\mathcal{X}_3=		\left(1,1\right), \left(0,1\right)$, $\mathcal{W}_1=\left(-a';1,2\right)$, and  $\mathcal{X}_2=	\left(a+1,1\right),\left(0,1\right)$.
\end{theorem}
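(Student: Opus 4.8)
The plan is to evaluate the double integral \eqref{eq-def-sc} by first collapsing it to three single integrals and then representing each integrand through Mellin--Barnes contours. Since $C_\mathrm{s}$ in \eqref{eq-sc} is supported on $\{\gamma_\mathrm{B}>\gamma_\mathrm{E}\}$, I would perform the inner integration against the appropriate marginal and swap the order of integration on the $\ln(1+\gamma_\mathrm{E})$ piece, which (after writing the base-two logarithm as $\ln(1+\cdot)/\ln2$) gives
\begin{align}\nonumber
\bar{C}_\mathrm{s}=\frac{1}{\ln2}\Big[&\underbrace{\int_0^\infty \ln(1+\gamma_\mathrm{B})f_{\gamma_\mathrm{B}}(\gamma_\mathrm{B})F_{\gamma_\mathrm{E}}(\gamma_\mathrm{B})\,d\gamma_\mathrm{B}}_{\mathcal{T}_1}\\\nonumber
&+\underbrace{\int_0^\infty \ln(1+\gamma_\mathrm{E})f_{\gamma_\mathrm{E}}(\gamma_\mathrm{E})F_{\gamma_\mathrm{B}}(\gamma_\mathrm{E})\,d\gamma_\mathrm{E}}_{\mathcal{T}_2}\\\nonumber
&-\underbrace{\int_0^\infty \ln(1+\gamma_\mathrm{E})f_{\gamma_\mathrm{E}}(\gamma_\mathrm{E})\,d\gamma_\mathrm{E}}_{\mathcal{T}_3}\Big],
\end{align}
where $\mathcal{T}_3$ is the leftover of $1-F_{\gamma_\mathrm{B}}$ after the swap. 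This three-term split is exactly the structure of \eqref{eq-asc}: $\mathcal{T}_1$ and $\mathcal{T}_2$ will become the two bivariate Fox H-functions (identical up to the relabelling $\mathrm{B}\leftrightarrow\mathrm{E}$, i.e.\ $a\leftrightarrow a'$, $b\leftrightarrow b'$, $\bar\gamma_\mathrm{B}\leftrightarrow\bar\gamma_\mathrm{E}$), while $\mathcal{T}_3$ will become the single Meijer G-function.

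For $\mathcal{T}_1$ I would substitute $x=\sqrt{\gamma_\mathrm{B}}$, which turns \eqref{eq-pdf} into a Gamma$(a+1)$ density in $x$ with scale $b\sqrt{\bar\gamma_\mathrm{B}}$, turns the logarithm into $\ln(1+x^2)$, and turns \eqref{eq-cdf2} into $\gamma\!\big(a'+1,x/(b'\sqrt{\bar\gamma_\mathrm{E}})\big)/\Gamma(a'+1)$. I would then invoke the standard representations $\ln(1+x^2)=G_{2,2}^{1,2}\!\big(x^2\,\big|\,{1,1\atop 1,0}\big)$ and $\gamma(\alpha,z)=G_{1,2}^{1,1}\!\big(z\,\big|\,{1\atop \alpha,0}\big)$, write both as contour integrals, keep the exponential $e^{-x/(b\sqrt{\bar\gamma_\mathrm{B}})}$ explicit, and carry out the elementary scale integral $\int_0^\infty x^{\rho}e^{-x/\beta}\,dx=\Gamma(\rho+1)\beta^{\rho+1}$, where $\rho$ is a linear combination of the two contour variables (the logarithm variable entering with weight $2$ because of $x^2$, the incomplete-gamma variable with weight $1$). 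This leaves a double Mellin--Barnes integral whose $x$-integration has produced the single ``outer'' Gamma factor matching $\mathcal{U}_1=(-a;1,2)$; matching it against the definition of the extended generalized bivariate Fox H-function then yields the first summand, with the inner $G_{1,2}$ block contributing $\mathcal{U}_2,\mathcal{V}_2$ and the inner $G_{2,2}$ block contributing $\mathcal{U}_3,\mathcal{V}_3$. Relabelling $\mathrm{B}\leftrightarrow\mathrm{E}$ reproduces $\mathcal{T}_2$ verbatim. For $\mathcal{T}_3$ only the logarithm and the exponential survive: after the same substitution and scale integral one is left with a single Mellin--Barnes integral in the logarithm variable, i.e.\ one Meijer G-function with argument $\propto b'^2\bar\gamma_\mathrm{E}$. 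The remaining step is to contract the $\Gamma$-factors that carry half-integer shifts (an unavoidable consequence of the $\sqrt{\gamma}$ in the exponent) via the Legendre duplication formula $\Gamma(2z)=2^{2z-1}\pi^{-1/2}\Gamma(z)\Gamma(z+\tfrac12)$; this is what produces the factor $4$ in $4b'^2\bar\gamma_\mathrm{E}$, the $\sqrt{2\pi}$ and $2^{(a'-2)/2}$ inside $\mathcal{O}$, and the final order $G_{3,2}^{1,3}$ with parameters $(-a'/2,1,1)$ over $(1,0)$.

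I expect the duplication bookkeeping to be the only genuinely delicate point. After $x=\sqrt{\gamma}$ the exponential is benign, but the logarithm and the CDF now carry $x^2$, so the two Mellin kernels enter the $x$-integral with different weights and the resulting mixed $\Gamma$-factor must be massaged into exactly the argument convention required by the bivariate Fox H definition (and, in $\mathcal{T}_3$, contracted by the duplication formula so that an apparent higher-order Meijer G collapses to $G_{3,2}^{1,3}$ with the clean factor-of-four argument). Verifying contour admissibility and convergence of the Mellin--Barnes integrals along the way is standard, and the base-two conversion simply accounts for the $\ln2$ appearing in $\mathcal{H}$ and $\mathcal{O}$.
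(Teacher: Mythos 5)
Your proposal is correct and follows essentially the same route as the paper: the identical three-term decomposition $I_1+I_2-I_3$, Meijer's G-representations of $\ln(1+\cdot)$ and the incomplete Gamma function, Mellin--Barnes contour expansion, and recognition of the extended bivariate Fox's H-function for the first two terms and a $G_{3,2}^{1,3}$ for the third. The only cosmetic difference is that you carry out the inner scale integral directly (with the duplication formula applied by hand), whereas the paper invokes the tabulated integral \cite[Eq.~(2.24.3.1)]{prudnikov1990more}, which packages exactly that computation.
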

\begin{proof}
	The details of proof are in Appendix \ref{app2}.
\end{proof}
\textcolor{blue}{\begin{remark}
		The SOP and the ASC of the considered RIS-aided communication system can be accurately obtained according to Theorems \ref{thm-sop} and \ref{thm-asc}, respectively. We can see that the SOP and ASC highly depend on the number of RIS elements $N$ and the fading parameters $(m_{s_i},m_i)$ so that the performance of SOP and ASC will improve when $N$ and $m_i$ increase because of better communication conditions.
\end{remark}}
\textcolor{blue}{\section{Asymptotic Analysis of Secrecy Metrics}
	In order to gain more insight into the behaviour of secrecy metrics in the high SNR regime, we analyze the asymptotic behaviour of both SOP and ASC in this section.
	\subsection{Asymptotic SOP}
	Here, in order to derive the asymptotic expression of the SOP, we exploit the expansions of the univariate and bivariate Fox's H-functions, which can be obtained by evaluating the residue of the corresponding integrands at the closest poles to the contour, namely, the minimum pole on the right for large Fox's H-function arguments and the maximum pole on the left for small ones \cite{chergui2016performance}. Regarding the exact expression that obtained for the SOP in \eqref{eq-sop}, at high SNR regime, i.e., $\bar{\gamma}_\mathrm{B}\rightarrow{\infty}$, we have $\frac{b\sqrt{\Bar{\gamma}_\mathrm{B}}}{\sqrt{\bar{R}_\mathrm{t}}}\rightarrow{\infty}$. In this case, the asymptotic SOP can be determined according to the following proposition.
	\begin{proposition}\label{pro-sop}
		The asymptotic SOP (i.e., $\bar{\gamma}_\mathrm{B}\rightarrow{\infty}$) for the considered RIS-aided secure communication system under Fisher-Snedecor $\mathcal{F}$ fading channels is determined as:
		\begin{align}
			P_{\mathrm{sop}}^{\mathrm{asy}}=\frac{\Gamma(a+a'+2)\Gamma(a+1)}{\Gamma(a'+1)\Gamma(a+2)}\left(\frac{b'^2R_\mathrm{t}\bar{\gamma}_\mathrm{E}}{b^2\bar{\gamma}_\mathrm{B}}\right)^{\frac{a+1}{2}}.\label{eq-asy-sop}
		\end{align}
	\end{proposition}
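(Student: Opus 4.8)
The plan is to extract the leading-order behaviour of the exact expression \eqref{eq-sop} as $\bar{\gamma}_\mathrm{B}\to\infty$ by the residue technique of \cite{chergui2016performance}. First I would replace the extended generalized bivariate Fox's H-function in \eqref{eq-sop} by its defining double Mellin--Barnes integral $\frac{1}{(2\pi j)^2}\iint_{\mathcal{L}_s\times\mathcal{L}_t}\phi(s,t)\,\phi_1(s)\,\phi_2(t)\,x_1^{-s}x_2^{-t}\,ds\,dt$, where $x_1=\frac{b\sqrt{\bar{\gamma}_\mathrm{B}}}{\sqrt{\bar{R}_\mathrm{t}}}$, $x_2=\frac{b'^2R_\mathrm{t}\bar{\gamma}_\mathrm{E}}{\bar{R}_\mathrm{t}}$, and the gamma-function clusters are read off from the parameter lists: the coupling factor from $\mathcal{I}_1$ as $\phi(s,t)=\Gamma\!\left(-1-\tfrac{s}{2}-t\right)$, the $s$-cluster from $\mathcal{I}_2,\mathcal{J}_2$ as $\phi_1(s)=\frac{\Gamma(s)\,\Gamma(1+a-s)}{\Gamma(-s/2)\,\Gamma(1+s)}$, and the $t$-cluster from $\mathcal{I}_3,\mathcal{J}_3$ as $\phi_2(t)=\Gamma(1+t)\,\Gamma(a'-1-2t)$. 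The essential observation is that only $x_1$ diverges while $x_2$ stays fixed, so the dominant term is captured by collapsing the $s$-integral onto its nearest pole.

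Next I would carry out the $s$-residue. In $\phi_1(s)$ the only factor generating the relevant (minimum right) poles is $\Gamma(1+a-s)$, whose leading simple pole sits at $s=a+1$; evaluating $\mathrm{Res}\!\left[\,\cdot\,,s=a+1\right]$ produces the factor $x_1^{-(a+1)}\propto\bar{\gamma}_\mathrm{B}^{-(a+1)/2}$, which already pins down the diversity order $\tfrac{a+1}{2}$ appearing in \eqref{eq-asy-sop}, substitutes the constant $\frac{\Gamma(a+1)}{\Gamma(-(a+1)/2)\,\Gamma(a+2)}$, and reduces the coupling factor to $\Gamma\!\left(-\tfrac{a+3}{2}-t\right)$. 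What remains is a single Mellin--Barnes integral in $t$, i.e. a univariate Fox's H-function of the fixed argument $x_2$, whose leading term I would in turn obtain from its dominant pole.

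I would then take the $t$-residue at the leading pole $t=-\tfrac{a+3}{2}$ furnished by the collapsed coupling factor $\Gamma\!\left(-\tfrac{a+3}{2}-t\right)$. There $\phi_2(t)$ contributes $\Gamma\!\left(-\tfrac{a+1}{2}\right)\Gamma(a+a'+2)$ and the power $x_2^{(a+3)/2}$ emerges. A pleasant cancellation occurs: the half-integer gamma $\Gamma\!\left(-\tfrac{a+1}{2}\right)$ from $\phi_2$ cancels the $\Gamma\!\left(-\tfrac{a+1}{2}\right)$ sitting in the denominator of the $s$-residue constant, leaving only $\frac{\Gamma(a+1)\,\Gamma(a+a'+2)}{\Gamma(a+2)}$. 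Assembling the pieces with the prefactor, which simplifies neatly to $\mathcal{E}=\frac{1}{x_2\,\Gamma(a'+1)}$, turns $x_2^{(a+3)/2}$ into $x_2^{(a+1)/2}/\Gamma(a'+1)$; since $x_1^{-(a+1)}x_2^{(a+1)/2}=\left(\frac{b'^2R_\mathrm{t}\bar{\gamma}_\mathrm{E}}{b^2\bar{\gamma}_\mathrm{B}}\right)^{(a+1)/2}$ (the $\bar{R}_\mathrm{t}$ factors cancel), this reproduces exactly \eqref{eq-asy-sop}.

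The main obstacle is the pole bookkeeping and the justification of the two residue choices: I must confirm that $s=a+1$ is genuinely the pole nearest the $s$-contour on the side selected by $x_1\to\infty$, that the $t$-dependent poles of the coupling factor do not overtake it, and that $t=-\tfrac{a+3}{2}$ is the dominant pole of the residual univariate integral, while keeping the residue signs and contour orientations consistent so that the result stays a positive probability. As an independent safeguard I would re-derive the same leading term without the Fox's H machinery: inserting the small-argument expansion $\frac{\gamma(a+1,z)}{\Gamma(a+1)}\sim\frac{z^{a+1}}{\Gamma(a+2)}$ of the CDF \eqref{eq-cdf} into the SOP integral \eqref{eq-sop1} reduces $P_\mathrm{sop}$ to a fractional moment of the Fisher--Snedecor eavesdropper SNR, $\mathbb{E}\!\left[\gamma_\mathrm{E}^{(a+1)/2}\right]$, which is a single gamma ratio and recovers the exponent $\tfrac{a+1}{2}$ together with the leading power in \eqref{eq-asy-sop}, confirming the diversity order and the structure of the constant.
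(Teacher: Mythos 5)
Your route is essentially the paper's: both unfold the result into the double Mellin--Barnes integral behind \eqref{eq-sop} (the paper's \eqref{eq-app2}) and extract the $\bar{\gamma}_\mathrm{B}\to\infty$ behaviour by iterated residues at the poles nearest the contours. The only difference is the order and the choice of gamma factor. The paper first collapses the $\zeta$-integral onto the pole of the \emph{coupling} factor $\Gamma\left(\tfrac{\zeta}{2}+s-1\right)$ at $\zeta=2(1-s)$, which transfers the $\bar{\gamma}_\mathrm{B}$-dependence into the surviving $s$-integral; that integral's argument $\tfrac{b'^2R_\mathrm{t}\bar{\gamma}_\mathrm{E}}{b^2\bar{\gamma}_\mathrm{B}}$ then genuinely tends to $0$, so the second residue at $s=\tfrac{a+3}{2}$ is again licensed by a limit. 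You instead take the first residue at the pole of $\Gamma(1+a-s)$ (the paper's $\Gamma(a+1+\zeta)$ at $\zeta=-(a+1)$) and the second at the collapsed coupling factor. In the paper's coordinates both iterated residues land on the same crossing point $\left(\zeta,s\right)=\left(-(a+1),\tfrac{a+3}{2}\right)$, and your assembled exponent and constant do reproduce \eqref{eq-asy-sop} exactly, so the two computations are equivalent in outcome. The weakness you correctly flag is real, though: after your first residue the remaining $t$-integral has the \emph{fixed} argument $x_2=\tfrac{b'^2R_\mathrm{t}\bar{\gamma}_\mathrm{E}}{\bar{R}_\mathrm{t}}$, which tends to no limit, so ``keep only the dominant pole'' is not an asymptotic statement there --- that integral is an exact univariate H-function whose entire residue series contributes at order $\bar{\gamma}_\mathrm{B}^{-(a+1)/2}$. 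The paper's ordering sidesteps this, and you should adopt it (or argue separately why the remaining poles are negligible).

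Your proposed elementary safeguard is a good idea, but if you carry it through it does \emph{not} confirm the constant. The small-argument expansion gives, rigorously, $P_\mathrm{sop}\sim\Gamma(a+2)^{-1}\left(b^2\bar{\gamma}_\mathrm{B}\right)^{-(a+1)/2}\,\mathbb{E}\!\left[\left(\gamma_\mathrm{E}R_\mathrm{t}+\bar{R}_\mathrm{t}\right)^{(a+1)/2}\right]$, and with $\mathbb{E}\!\left[\gamma_\mathrm{E}^{(a+1)/2}\right]=\left(b'^2\bar{\gamma}_\mathrm{E}\right)^{(a+1)/2}\Gamma(a+a'+2)/\Gamma(a'+1)$ from \eqref{eq-pdf2} the leading coefficient comes out \emph{without} the factor $\Gamma(a+1)$ that appears in \eqref{eq-asy-sop}. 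That extra $\Gamma(a+1)$ is inherited from the exact expression \eqref{eq-sop} via the step in Appendix B where $F_{\gamma_\mathrm{B}}$ is replaced by $G_{1,2}^{1,1}$ and the $1/\Gamma(a+1)$ normalisation of the incomplete gamma function is dropped. So your cross-check, done honestly, confirms the diversity order $\tfrac{a+1}{2}$ and the gamma-ratio structure but exposes a constant-factor inconsistency rather than validating it; you should not present it as a confirmation of \eqref{eq-asy-sop} as stated.
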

	\begin{proof}
		The details of proof are in Appendix \ref{app3}.
	\end{proof}
	\subsection{Asymptotic ASC}
	By applying the same approach that we exploited for finding the asymptotic SOP, we determine the asymptotic expression for the ASC at the high SNR regime (i.e., $\bar{\gamma}_\mathrm{B}\rightarrow \infty$) in the following proposition.
	\begin{proposition}\label{pro-asc}
		The asymptotic ASC (i.e., $\bar{\gamma}_\mathrm{B}\rightarrow{\infty}$) for the considered RIS-aided secure communication system under Fisher-Snedecor $\mathcal{F}$ fading channels is determined as:
		\begin{align}\nonumber
			&\bar{C}^{\mathrm{asy}}_\mathrm{s}=\mathcal{H}_1H_{2,3}^{4,3}\left(
			\begin{array}{c}
				\hspace{-0.3cm}	b'\sqrt{\bar{\gamma}_\mathrm{E}}\\
			\end{array}\Bigg\vert
			\begin{array}{c}
				\left(\frac{a+3}{2},\frac{1}{2}\right),\left(\frac{a+3}{2},\frac{1}{2}\right),\left(-a',1\right)\\
				\left(\frac{a+3}{2},\frac{1}{2}\right),(0,1),\left(\frac{a+1}{2},\frac{1}{2}\right)\\
			\end{array}\hspace{-1ex}\right)\\\nonumber
			&+\mathcal{H}H_{2,1}^{2,2}\left(
			\begin{array}{c}
				\hspace{-0.2cm}	\frac{b'\sqrt{\bar{\gamma}_\mathrm{E}}}{b\sqrt{\bar{\gamma}_\mathrm{B}}}\\
			\end{array}\Bigg\vert
			\begin{array}{c}
				\left(-a',1\right),\left(1,1\right)\\
				\left(0,1\right),(1+a,1)
			\end{array}\right)\\\nonumber
			&+\mathcal{H}_2H_{2,3}^{4,3}\left(
			\begin{array}{c}
				\hspace{-0.2cm}	b\sqrt{\bar{\gamma}_\mathrm{B}}\\
			\end{array}\Bigg\vert
			\begin{array}{c}
				\left(\frac{a'+3}{2},\frac{1}{2}\right),\left(\frac{a'+3}{2},\frac{1}{2}\right),\left(-a,1\right)\\
				\left(\frac{a'+3}{2},\frac{1}{2}\right),(0,1),\left(\frac{a'+1}{2},\frac{1}{2}\right)\\
			\end{array}\hspace{-1ex}\right)\\\nonumber
			&+\mathcal{H}H_{2,1}^{2,2}\left(
			\begin{array}{c}
				\hspace{-0.2cm}	\frac{b\sqrt{\bar{\gamma}_\mathrm{B}}}{b'\sqrt{\bar{\gamma}_\mathrm{E}}}\\
			\end{array}\Bigg\vert
			\begin{array}{c}
				\left(-a,1\right),\left(1,1\right)\\
				\left(0,1\right),(1+a',1)
			\end{array}\right)\\
			&-\mathcal{O} G_{3,2}^{1,3}\left(
			\begin{array}{c}
				4b'^2\bar{\gamma}_\mathrm{E}\\
			\end{array}\Bigg\vert
			\begin{array}{c}
				(-\frac{a'}{2},1,1)\\
				(1,0)\\
			\end{array}\right),\label{eq-asy-asc}
		\end{align}
		where $\mathcal{H}_1=\mathcal{H}\left(b^2\bar{\gamma}_\mathrm{B}\right)^{-\frac{1+a}{2}}$ and $\mathcal{H}_2=\mathcal{H}\left(b'^2\bar{\gamma}_\mathrm{E}\right)^{-\frac{1+a'}{2}}$.
	\end{proposition}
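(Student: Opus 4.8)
The plan is to start from the exact ASC in \eqref{eq-asc} and to treat its three constituents separately as $\bar{\gamma}_\mathrm{B}\to\infty$. The third constituent, the single Meijer's $G$-function $G_{3,2}^{1,3}(4b'^2\bar{\gamma}_\mathrm{E})$, depends only on $\bar{\gamma}_\mathrm{E}$ and is therefore inert in this limit; it is carried over verbatim and already accounts for the last line of \eqref{eq-asy-asc}. All of the real work is thus the asymptotic reduction of the two bivariate Fox's $H$-functions, each of which I expect to collapse into a pair of univariate Fox's $H$-functions, yielding the four $H$-terms of the proposition.

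First I would write each bivariate Fox's $H$-function as its defining double Mellin--Barnes contour integral, $H=\frac{1}{(2\pi i)^2}\int_{\mathcal{L}_s}\int_{\mathcal{L}_t}\Xi(s,t)\,x^s y^t\,ds\,dt$, where $\Xi$ is the product of Gamma factors read off from the displayed parameter lists and $(x,y)$ are the two arguments. In the first bivariate term the arguments are $x_1=\frac{b\sqrt{\bar{\gamma}_\mathrm{B}}}{b'\sqrt{\bar{\gamma}_\mathrm{E}}}$ and $x_2=b^2\bar{\gamma}_\mathrm{B}$, both diverging; in the second term the ratio argument $\frac{b'\sqrt{\bar{\gamma}_\mathrm{E}}}{b\sqrt{\bar{\gamma}_\mathrm{B}}}$ tends to zero while the other stays fixed. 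For each term I would collapse the contour attached to the argument that tends to a limit, following the residue prescription of \cite{chergui2016performance}: a large argument forces closure to the right with leading behaviour set by the minimum pole there, whereas a vanishing argument forces closure to the left with behaviour set by the maximum pole. Summing the residues along the enclosed pole sequences of the relevant Gamma factors collapses the double integral into a single Mellin--Barnes integral --- a univariate Fox's $H$-function --- and extracts the power-law prefactors that produce the constants $\mathcal{H}_1=\mathcal{H}(b^2\bar{\gamma}_\mathrm{B})^{-(1+a)/2}$ and $\mathcal{H}_2=\mathcal{H}(b'^2\bar{\gamma}_\mathrm{E})^{-(1+a')/2}$.

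Because each collapsed integrand carries two distinct pole sequences coming from two different Gamma factors, each bivariate term contributes two univariate pieces: the deeper sequence produces the ``large'' $H_{2,3}^{4,3}$ term (whose half-integer Mellin steps $(\cdot,\tfrac12)$ are inherited from the $\sqrt{\gamma}$ dependence already visible in \eqref{eq-pdf}) and the other sequence produces the ``crossover'' $H_{2,1}^{2,2}$ term. The surviving single integral is then matched parameter by parameter against the Mellin--Barnes definition of the univariate Fox's $H$-function to read off the displayed coefficient pairs, its argument emerging as a fixed combination of $x_1$ and $x_2$. Finally, I would exploit the Bob/Eve symmetry of \eqref{eq-asc} --- the interchange $a\leftrightarrow a'$, $b\leftrightarrow b'$, $\bar{\gamma}_\mathrm{B}\leftrightarrow\bar{\gamma}_\mathrm{E}$ --- which maps the pair obtained from one bivariate term onto the pair from the other, both as a labour-saving device and as a consistency check that the four $H$-terms of \eqref{eq-asy-asc} are correctly assembled.

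The hard part will be the bookkeeping at the coupled poles: since $\Xi(s,t)$ ties the two contour variables together, taking a residue in one variable shifts the Gamma arguments of the other, so I must track exactly how the surviving integrand is displaced and verify which pole sequence is genuinely dominant (closest to the contour) in the $\bar{\gamma}_\mathrm{B}\to\infty$ regime. Equally delicate is the treatment of any double poles, since these are what encode the logarithmic $\ln\bar{\gamma}_\mathrm{B}$ growth of the capacity inside the Fox's $H$ representation; mis-ordering them would silently drop the leading term. Once the dominant sequences are pinned down and the residues evaluated, recognising the resulting single integrals as the stated $H_{2,3}^{4,3}$ and $H_{2,1}^{2,2}$ functions is routine.
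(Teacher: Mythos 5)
Your plan coincides with the paper's own proof of Proposition \ref{pro-asc}: carry the $G_{3,2}^{1,3}$ term ($I_3$) over unchanged, and for each bivariate Fox's H-function collapse the contour variable attached to the average-SNR argument by taking residues at the two dominant left poles ($s=0$ and $s=-\frac{1+a+\zeta}{2}$, and their primed counterparts for $I_2$), so that each bivariate term yields exactly the stated pair of univariate H-functions once the surviving $\zeta$-integral is matched to the Mellin--Barnes definition, with the Bob/Eve symmetry $a\leftrightarrow a'$, $b\leftrightarrow b'$, $\bar{\gamma}_\mathrm{B}\leftrightarrow\bar{\gamma}_\mathrm{E}$ supplying the second pair. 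Your caution about double poles is well placed --- since $\Gamma(s)^2/\Gamma(1+s)=\Gamma(s)/s$ has a double pole at $s=0$, whose residue carries the $\ln(b^2\bar{\gamma}_\mathrm{B})$ growth, while the paper evaluates it as if it were simple --- but this is a refinement of, not a departure from, the same method.
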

	\begin{remark}
		By comparing Theorems \ref{thm-sop} and \ref{thm-asc} with Propositions \ref{pro-sop} and \ref{pro-asc}, we can see that the SOP and ASC which are derived in terms of the bivariate Fox's H-function can be simplified to the univariate Fox's H-function at the high SNR regime, by exploiting the residue approach. We can also observe that the asymptotic results in \eqref{eq-asy-sop} and \eqref{eq-asy-asc} will gradually approximate the exact SOP and ASC, respectively, with high accuracy as $\bar{\gamma}_\mathrm{B}\rightarrow\infty$.
\end{remark}}
\section{Numerical Results}
In this section, we evaluate the theoretical expressions previously derived, which are double-checked in all instances with Monte-Carlo (MC) simulations. Besides, it should be noted that although the implementation
of the extended generalized bivariate Fox's H-function is not available in mathematical packages, like Mathematica,
Maple, or MATLAB, it is computationally tractable and programmable as explained in \cite{peppas2012new}. \textcolor{blue}{We also set $P=30$dBm, $\sigma^2_{\mathrm{B}}=-40$dBm, $\sigma^2_{\mathrm{B}}=-20$dBm, $d_\mathrm{AR}=d_\mathrm{RB}=d_\mathrm{RE}=10$m, and $\alpha=3.$}

Fig. \ref{fig-c-gamma} illustrates the behavior of the SOP in terms of $\bar{\gamma}_\mathrm{B}$ for the selected value of $N$ under Fisher-Snedecor $\mathcal{F}$ fading channels. It can be seen as the number of RIS reflecting elements increases the SOP performance improves. The reason is that exploiting the RIS can provide a channel with higher quality and also enhance the received SNR when phase processing is utilized. Fig. \ref{fig-out-m} shows the SOP performance in terms of $\bar{\gamma}_\mathrm{B}$ for different  values of fading parameters $m_i$, $i\in\left\{\mathrm{R},\mathrm{B},\mathrm{E}\right\}$ under RIS deployment. We can see that as the fading is less severe (i.e., as $m_i$ increases) the SOP performance improves, meaning that system performance ameliorates (degrades) in environments that exhibit light (heavy) fading characteristics. Figs. \ref{fig-c-gamma} and \ref{fig-c-m} represent the ASC performance based on the variation of $\bar{\gamma}_{\mathrm{B}}$ for selected values of $N$ and $\bar{\gamma}_{\mathrm{E}}$, respectively. As shown in Fig. \ref{fig-c-gamma}, a relatively small increase in the number of RIS reflecting elements has a constructive impact on the ASC performance, regardless of the main and eavesdropper channels condition. Fig. \ref{fig-c-m} shows that under RIS deployment and Fisher-Snedecor $\mathcal{F}$ fading channels, the ASC performance improves as $\bar{\gamma}_\mathrm{B}$ grows. Moreover, we can see that the ASC can still be achieved even if $\bar{\gamma}_\mathrm{B}\leq\bar{\gamma}_\mathrm{E}$. In summary, both analytical and simulation results show that considering the RIS and then increasing the RIS’s elements offer more degrees of freedom
for efficient beamforming which
can significantly improve the system performance. 

\begin{figure}[!t]\vspace{0ex}
	\centering
	\includegraphics[width=0.5\textwidth]{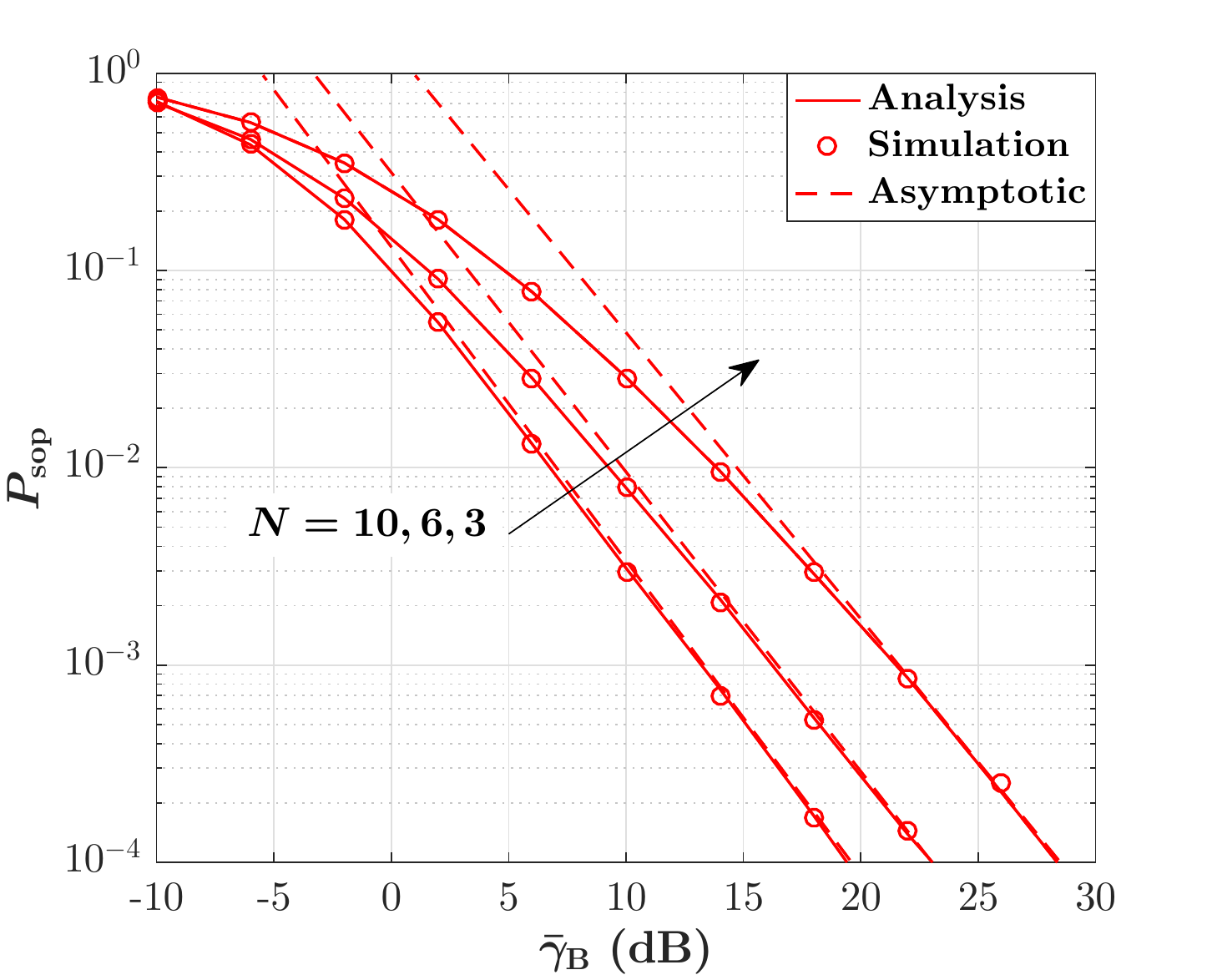} %\textwidth
	\caption{SOP versus $\bar{\gamma}_\mathrm{B}$ for selected values of $N$, when $\bar{\gamma}_\mathrm{E}=-10$dB, $R_\mathrm{s}=1$ and $\left(m_{s_i},m_i\right)=\left(3,2\right)$.} 
	\label{fig-out-gamma}
\end{figure}
\begin{figure}[!t]\vspace{0ex}
	\centering
	\includegraphics[width=0.5\textwidth]{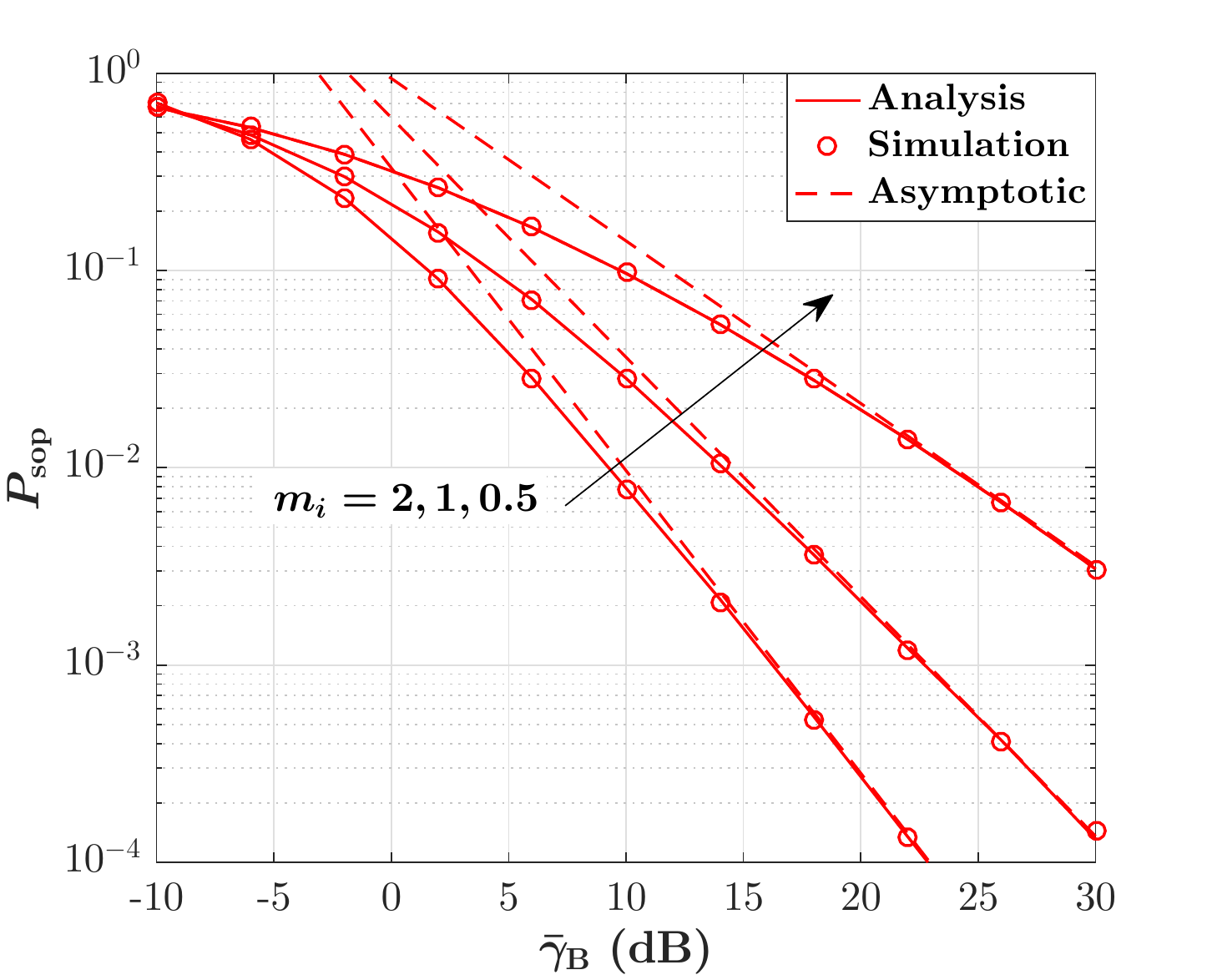} %\textwidth
	\caption{SOP versus $\bar{\gamma}_\mathrm{B}$ for selected values of $m_i$, when $N=6$, $m_{s_i}=3$, and $R_\mathrm{s}=1$.} 
	\label{fig-out-m}
\end{figure}
\begin{figure}[!h]\vspace{0ex}
	\centering
	\includegraphics[width=0.5\textwidth]{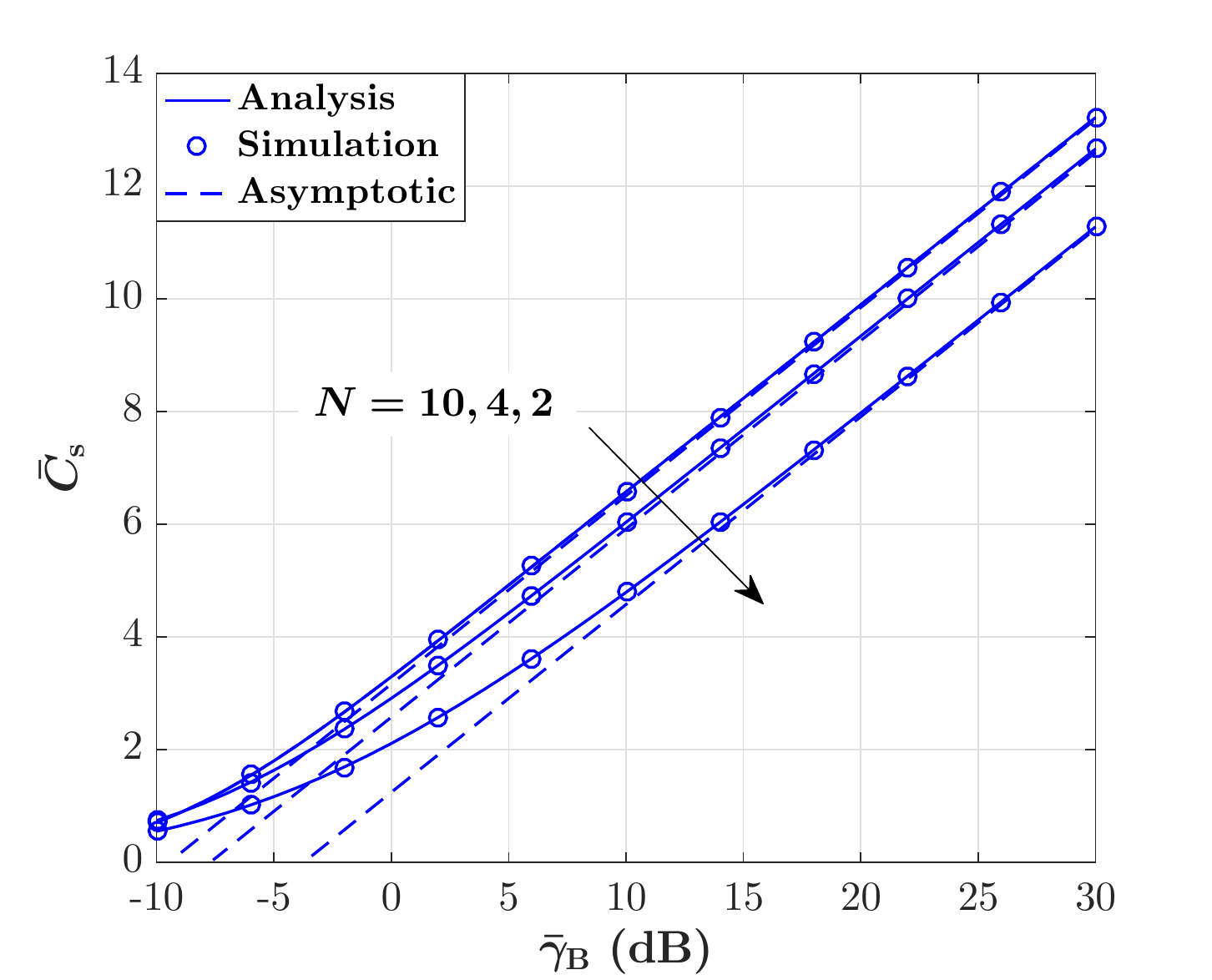} %\textwidth
	\caption{ASC versus $\bar{\gamma}_\mathrm{B}$ for selected values of $N$, when $\bar{\gamma}_\mathrm{E}=-10$dB, $R_\mathrm{s}=1$, and $\left(m_{s_i},m_i\right)=\left(3,2\right)$.} 
	\label{fig-c-gamma}
\end{figure}
\begin{figure}[!h]\vspace{0ex}
	\centering
	\includegraphics[width=0.5\textwidth]{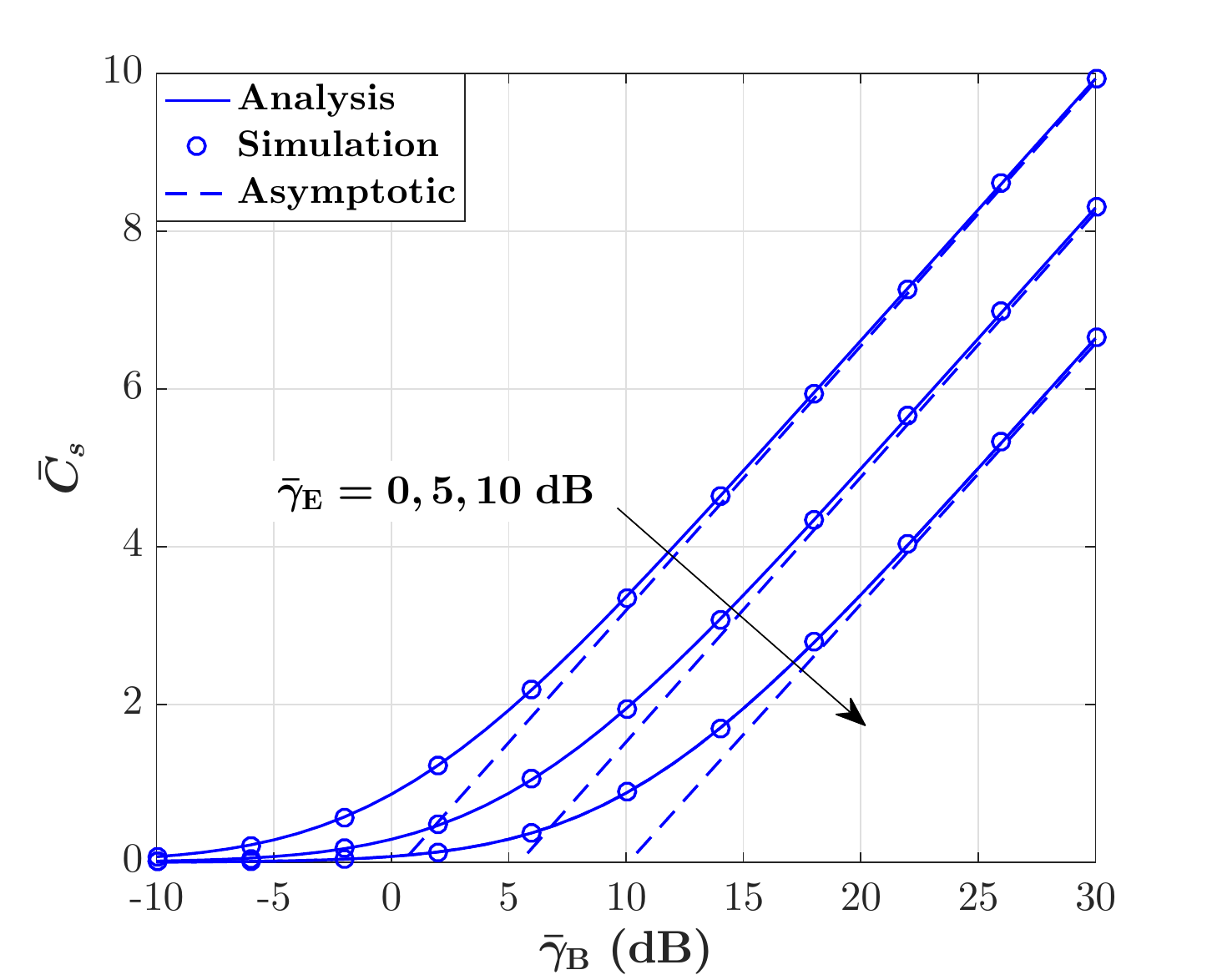} %\textwidth
	\caption{ASC versus $\bar{\gamma}_\mathrm{B}$ for selected values of $\bar{\gamma}_\mathrm{E}$, when $N=6$, $R_\mathrm{s}=1$, and $\left(m_{s_i},m_i\right)=\left(3,2\right)$.} 
	\label{fig-c-m}
\end{figure}

%\begin{figure}[!h]\vspace{0ex}
%	\centering
%	\includegraphics[width=0.8\columnwidth]{out-gamma.eps} %\textwidth
%	\caption{SOP versus $\bar{\gamma}_\mathrm{B}$ for selected values of $N$, when $\bar{\gamma}_\mathrm{E}=10$dB, $R_\mathrm{s}=1$ and $\left(m_{s_i},m_i\right)=\left(3,2\right)$, $i\in\left\{\mathrm{B,R,E}\right\}$.} %\vspace{-0.6cm}
%	\label{fig-c-gamma}
%\end{figure}%\vspace{0ex}
%\begin{figure}[!h]\vspace{0ex}
%	\centering
%	\includegraphics[width=0.8\columnwidth]{out-m.eps} %\textwidth
%	\caption{SOP versus $\bar{\gamma}_\mathrm{B}/\bar{\gamma}_\mathrm{E}$ for selected values of $m_i$, when $N=4$, $m_{s_i}=3$, and $R_\mathrm{s}=1$.} %\vspace{-0.6cm}
%	\label{fig-c-m}
%\end{figure}%\vspace{0ex}
%\begin{figure}[!t]\vspace{0ex}
%	\centering
%	\includegraphics[width=0.8\columnwidth]{c-gamma.eps} %\textwidth
%	\caption{ASC versus $\bar{\gamma}_\mathrm{B}$ for selected values of $N$, when $\bar{\gamma}_\mathrm{E}=10$dB, $R_\mathrm{s}=1$ and $\left(m_{s_i},m_i\right)=\left(3,2\right)$, $i\in\left\{\mathrm{B,R,E}\right\}$.} %\vspace{-0.6cm}
%	\label{fig-out-gamma}
%\end{figure}%\vspace{0ex}
%\begin{figure}[!t]\vspace{0ex}
%	\centering
%	\includegraphics[width=0.8\columnwidth]{c-m.eps} %\textwidth
%	\caption{SOP versus $\bar{\gamma}_\mathrm{B}$ for selected values of $\bar{\gamma}_\mathrm{E}$, when $N=4$, $R_\mathrm{s}=1$, and $\left(m_{s_i},m_i\right)=\left(3,2\right)$, $i\in\left\{\mathrm{B,R,E}\right\}$.} %\vspace{-0.6cm}
%	\label{fig-out-m}
%\end{figure}%\vspace{0ex}
\section{Conclusion}\vspace{-0.1cm}
In this paper, we studied the secrecy performance of a RIS-aided wireless communication system under Fisher-Snedecor $\mathcal{F}$ fading channels. In particular, we first presented the analytical expressions for the PDF and the CDF of corresponding SNRs, and then we derived the closed-form expressions of the SOP and the ASC. \textcolor{blue}{We also analyzed the asymptotic behaviour of the obtained secrecy metrics by using the residue method.} The simulation results validated the accuracy of analytical results and showed that the considering RIS between the transmitter and other nodes has constructive effects on system performance in terms of the SOP and ASC. 
\appendix
\subsection{Proof of Theorem 1}\label{app-pdf}
Let $W_n=G_nH_n$ be the product of two independent Fisher-Snedecor $\mathcal{F}$ RVs. Then, the PDF of $W_n$ is defined as \cite{badarneh2020product}:
\begin{align}\nonumber
	&f_{W_n}(w_n)=\frac{2w_n^{-1}}{\Gamma(m_\mathrm{R})\Gamma(m_{s_\mathrm{R}})\Gamma(m_\mathrm{B})\Gamma(m_{s_\mathrm{B}})} \\
	&\times G_{2,2}^{2,2}\left(
	\begin{array}{c}
		\frac{w_n^2 m_\mathrm{R}m_\mathrm{B}}{(m_{s_\mathrm{R}}-1)(m_{s_\mathrm{B}}-1)\omega_2\omega_1}\\
	\end{array}\Bigg\vert
	\begin{array}{c}
		1-m_{s_\mathrm{R}},1-m_{s_\mathrm{B}}\\
		m_\mathrm{R},m_\mathrm{B}\\
	\end{array}\right).
\end{align}
The mean and variance of $W_i$ are respectively given by \cite{badarneh2020product}
\begin{align}\nonumber
	&\mathbb{E}\left(W_n\right)=\frac{B\left(m_\mathrm{B}+\frac{1}{2},m_{s_\mathrm{B}}-\frac{1}{2}\right)B\left(m_\mathrm{R}+\frac{1}{2},m_{s_\mathrm{R}}-\frac{1}{2}\right)}{B\left(m_\mathrm{B},m_{s_\mathrm{B}}\right)B\left(m_\mathrm{R},m_{s_\mathrm{R}}\right)}\\
	&\times\left(\frac{m_\mathrm{B}m_\mathrm{R}}{(m_{s_\mathrm{B}}-1)(m_{s_\mathrm{R}}-1)\Omega_\mathrm{B}\Omega_\mathrm{R}}\right)^{-\frac{1}{2}},
\end{align}
%\begin{align}\nonumber
%	\mathbb{E}\left(W_n^2\right)=&\frac{B\left(m_\mathrm{B}+1,m_{s_\mathrm{B}}-1\right)B\left(m_\mathrm{R}+1,m_{s_\mathrm{R}}-1\right)}{B\left(m_\mathrm{B},m_{s_\mathrm{B}}\right)B\left(m_\mathrm{R},m_{s_\mathrm{R}}\right)}\\
%	&\times\left(\frac{m_\mathrm{B}m_\mathrm{R}}{(m_{s_\mathrm{B}}-1)(m_{s_\mathrm{R}}-1)\Omega_\mathrm{B}\Omega_\mathrm{R}}\right)^{-1}
%\end{align}
\begin{align}\nonumber
	&	\text{Var}\left(W_n\right)=\frac{(m_{s_\mathrm{B}}-1)(m_{s_\mathrm{R}}-1)\Omega_\mathrm{B}\Omega_\mathrm{R}}{m_\mathrm{B}m_\mathrm{R}B\left(m_\mathrm{B},m_{s_\mathrm{B}}\right)B\left(m_\mathrm{R},m_{s_\mathrm{R}}\right)}\\\nonumber
	&\times\Bigg[\frac{B\left(m_\mathrm{B}+\frac{1}{2},m_{s_\mathrm{B}}-\frac{1}{2}\right)^2B\left(m_\mathrm{R}+\frac{1}{2},m_{s_\mathrm{R}}-\frac{1}{2}\right)^2}{B\left(m_\mathrm{B},m_{s_\mathrm{B}}\right)B\left(m_\mathrm{R},m_{s_\mathrm{R}}\right)}\\
	&-B\left(m_\mathrm{B}+1,m_{s_\mathrm{B}}-1\right)B\left(m_\mathrm{R}+1,m_{s_\mathrm{R}}-1\right)\Bigg].
\end{align}
%\begin{align}
%\mathbb{E}\left(W_i\right)=\prod_{l=1}^2\frac{\left(\frac{m_l}{(m_{s_l}-1)\Omega_l}\right)^{-1/2}}{B\left(m_l,m_{s_l}\right)}B\left(m_l+\frac{1}{2},m_{s_l}-\frac{1}{2}\right)
%\end{align}
%\begin{align}
%	\mathbb{E}\left(W_i^2\right)=\prod_{l=1}^2\frac{\left(\frac{m_l}{(m_{s_l}-1)\Omega_l}\right)^{-1}}{B\left(m_l,m_{s_l}\right)}B\left(m_l+1,m_{s_l}-1\right)
%\end{align}
%\begin{align}
%	\text{Var}\left(W_i\right)=\prod_{l=1}^2\frac{\left(\frac{m_l}{(m_{s_l}-1)\Omega_l}\right)^{-1}}{B\left(m_l,m_{s_l}\right)^2}B\left(m_l+\frac{1}{2},m_{s_l}-\frac{1}{2}\right)^2-\prod_{l=1}^2\frac{\left(\frac{m_l}{(m_{s_l}-1)\Omega_l}\right)^{-1}}{B\left(m_l,m_{s_l}\right)}B\left(m_l+1,m_{s_l}-1\right)
%\end{align}
Now, by exploiting the first term of a Laguerre expansion \cite[Sec. 2.2.2]{primak2005stochastic}, the PDF of $W=\sum_{n=1}^NW_n$ can be accurately determined as:
\begin{align}
	f_W(w)=\frac{w^a}{b^{a+1}\Gamma(a+1)}\mathrm{e}^{-\frac{w}{b}},
\end{align}
where $a=\frac{N\mathbb{E}(W_n)^2}{\text{Var}(W_n)}-1$ and  $b=\frac{\text{Var}(W_n)}{\mathbb{E}(W_n)}$. Eventually, by performing the RV transformation $f_{\gamma_\mathrm{B}}(\gamma_\mathrm{B})=\frac{1}{2\sqrt{\bar{\gamma}_\mathrm{B}\gamma_\mathrm{B}}}f_W\left(\sqrt{\frac{\gamma_\mathrm{B}}{\bar{\gamma}_\mathrm{B}}}\right)$, the PDF of $\gamma_{\mathrm{B}}$ is obtained as \eqref{eq-pdf}. Besides, by using definition $F_{\gamma_\mathrm{B}}(\gamma_\mathrm{B})=\int_{0}^\gamma f_{\gamma_\mathrm{B}}(x)dx$, and then computing the corresponding integral, %with the help of [], 
the CDF of $\gamma_{\mathrm{B}}$ is also achieved as \eqref{eq-cdf}. 
\subsection{Proof of  Theorem 2}\label{app1}
By exploiting the Parseval relation for Mellin transform \cite[Eq. (8.3.23)]{davies2002integral}, \eqref{eq-sop1} can be rewritten as follows:
\begin{align}
	\hspace{-0.3cm}	P_\mathrm{sop}=\frac{1}{2\pi j}\int_{\mathcal{L}_1}\mathcal{M}\left[F_{\gamma_\mathrm{B}}\left(\gamma_\mathrm{t}\right),1-s\right]\mathcal{M}\left[f_{\gamma_\mathrm{E}}\left(\gamma_\mathrm{E}\right),s\right]ds,\label{eq-sop11}
\end{align}
where $\mathcal{L}_1$ is the integration path from $\nu-j\infty$ to $\nu+j\infty$ for
a constant value of $\nu$ \cite{yoo2017fisher}. Then, by using the definition
of Meijer's G-function, $\mathcal{M}\left[F_{\gamma_\mathrm{B}}\left(\gamma_\mathrm{t}\right),1-s\right]$ can be obtained as:
\begin{align}
	&\mathcal{M}\left[F_{\gamma_\mathrm{B}}\left(\gamma_\mathrm{t}\right),1-s\right]=\int_{0}^\infty \gamma_\mathrm{E}^{-s}F_{\gamma_\mathrm{B}}\left(\gamma_\mathrm{t}\right)d\gamma_\mathrm{E},\\
	&\overset{(a)}{=}\int_{0}^\infty \gamma_\mathrm{E}^{-s}G_{1,2}^{1,1}\left(
	\begin{array}{c}
		\frac{\sqrt{\gamma_\mathrm{t}}}{b\sqrt{\bar{\gamma}_\mathrm{B}}}\\
	\end{array}\Bigg\vert
	\begin{array}{c}
		1\\
		a+1,0\\
	\end{array}\right)d\gamma_\mathrm{E},\\\nonumber
	&\overset{(b)}{=}\frac{1}{2\pi j}\int_{\mathcal{L}_2}\int_{0}^\infty \gamma_\mathrm{E}^{-s}\left(\frac{\sqrt{\gamma_\mathrm{t}}}{b\sqrt{\bar{\gamma}_\mathrm{B}}}\right)^{-\zeta}\\
	&\quad\times \frac{\Gamma\left(a+1+\zeta\right)\Gamma\left(-\zeta\right)}{\Gamma\left(1-\zeta\right)}d\zeta d\gamma_\mathrm{E},\label{eq-sop2}
\end{align}
where $(a)$ is obtained by converting the incomplete Gamma
function in \eqref{eq-cdf} into Meijer's G-function, and $(b)$  is achieved by exploiting the definition of Meijer's G-function  and interchanging the integrals order. With the help of \cite[Eq. (3.194.3)]{zwillinger2007table}, the inner integral in \eqref{eq-sop2} can be computed as:
\begin{align}
	&	\int_{0}^\infty \gamma_\mathrm{E}^{-s}\gamma_\mathrm{t}^{-\zeta/2}d\gamma_\mathrm{E}\overset{(c)}{=}\int_{0}^\infty\frac{\gamma_\mathrm{E}^{-s}}{\left(\gamma_\mathrm{E}R_\mathrm{t}+\bar{R}_\mathrm{t}\right)^{\zeta/2}}d\gamma_\mathrm{E},\\
	&\overset{(d)}{=}\bar{R}_\mathrm{t}^{-\zeta/2}\left(\frac{R_\mathrm{t}}{\bar{R}_\mathrm{t}}\right)^{s-1}B\left(1-s,\frac{\zeta}{2}+s-1\right),\\
	&\overset{(e)}{=}\bar{R}_\mathrm{t}^{-\zeta/2}\left(\frac{R_\mathrm{t}}{\bar{R}_\mathrm{t}}\right)^{s-1}\frac{\Gamma\left(1-s\right)\Gamma\left(\frac{\zeta}{2}+s-1\right)}{\Gamma\left(\frac{\zeta}{2}\right)},\label{eq-sop3}
\end{align}
where $(c)$ is obtained representing $\gamma_\mathrm{t}=\gamma_{\mathrm{E}}R_\mathrm{t}+\bar{R}_\mathrm{t}$, $(d)$ is derived
form \cite[Eq. (3.194.3)]{zwillinger2007table}, and $(e)$ is obtained by utilizing the
property of beta function where $B(a_1, b_1)=\frac{\Gamma\left(a_1\right)\Gamma\left(b_1\right)}{
	\Gamma\left(a_1+b_1\right)}$. Now, by inserting \eqref{eq-sop3} into \eqref{eq-sop2}, we have:
\begin{align}\nonumber
	&\mathcal{M}\left[F_{\gamma_\mathrm{B}}\left(\gamma_\mathrm{t}\right),1-s\right]=\frac{\Gamma\left(1-s\right)}{2\pi j}\left(\frac{R_\mathrm{t}}{\bar{R}_\mathrm{t}}\right)^{s-1}\\\nonumber
	&\times\int_{\mathcal{L}_2}\frac{\Gamma\left(a+1+\zeta\right)\Gamma\left(-\zeta\right)\Gamma\left(\frac{\zeta}{2}+s-1\right)}{\Gamma\left(\frac{\zeta}{2}\right)\Gamma\left(1-\zeta\right)}\\
	&\quad\times\left(\frac{\sqrt{\bar{R}_\mathrm{t}}}{b\sqrt{\bar{\gamma}_\mathrm{B}}}\right)^{-\zeta}d\zeta,\label{eq-sop4}
\end{align}
where $\mathcal{L}_2$ is a specific counter. Next, by assuming the change of variable  $u=\frac{\sqrt{\gamma_\mathrm{E}}}{b'\sqrt{\bar{\gamma}_\mathrm{E}}}$ and exploiting the definition of Gamma function, the Mellin transform $\mathcal{M}\left[f_{\gamma_\mathrm{E}}\left(\gamma_\mathrm{E}\right),s\right]$ can be computed as:
\begin{align}\nonumber
	&\mathcal{M}\left[f_{\gamma_\mathrm{E}}\left(\gamma_\mathrm{E}\right),s\right]=\frac{1}{2\bar{\gamma}_\mathrm{E}^{(a'+1)/2}b'^{a'+1}\Gamma(a'+1)}\\\nonumber
	&\times\int_0^\infty
	\gamma_\mathrm{E}^{s-1}\gamma_\mathrm{E}^{(a'-1)/2}\mathrm{e}^{-\frac{\sqrt{\gamma_\mathrm{E}}}{b'\sqrt{\bar{\gamma}_\mathrm{E}}}}d\gamma_{\mathrm{E}}\\
	&=\frac{b'^{2s-2}\bar{\gamma}_\mathrm{E}^{s-1}}{\Gamma\left(a'+1\right)}\Gamma\left(2s+a'-1\right).
	\label{eq-sop5}
\end{align}
%\begin{align}
%&=2b'^{2s+a'-1}\bar{\gamma}_\mathrm{E}^{s+\frac{a'-1}{2}}\int_0^\infty u^{2s+a'-2}\mathrm{e}^{-u}du\\
%&=2b'^{2s+a'-1}\bar{\gamma}_\mathrm{E}^{s+\frac{a'-1}{2}}\Gamma\left(2s+a'-1\right)
%\end{align}
%so,
%\begin{align}
%\mathcal{M}\left[f_{\gamma_\mathrm{E}}\left(\gamma_\mathrm{E}\right),s\right]=\frac{b'^{2s-2}\bar{\gamma}_\mathrm{E}^{s-1}}{\Gamma\left(a'+1\right)}\Gamma\left(2s+a'-1\right)\label{eq-sop5}
%\end{align}
%\begin{align}
%	&\hspace{-0.2cm}\mathcal{M}\left[f_{\gamma_\mathrm{E}}\left(\gamma_\mathrm{E}\right),s\right]=\mathcal{E}_\mathrm{E}\int_{0}^\infty \gamma_\mathrm{E}^{s-1}G_{1,1}^{1,1}\left(
%	\begin{array}{c}
%		\lambda_\mathrm{E}\gamma_\mathrm{E}\\
%	\end{array}\Bigg\vert
%	\begin{array}{c}
%		-m_{s_\mathrm{E}}\\
%		m_\mathrm{E}-1\\
%	\end{array}\right)d\gamma_\mathrm{E},\\
%	&=\mathcal{E}_\mathrm{E}\frac{\Gamma\left(m_\mathrm{E}-1+s\right)\Gamma\left(1+m_{s_\mathrm{E}}-s\right)}{\lambda_\mathrm{E}^s}.\label{eq-sop5}
%\end{align}
Then, by substituting \eqref{eq-sop4} and \eqref{eq-sop5} into \eqref{eq-sop11}, $P_\mathrm{sop}$ can be written as:
\begin{align}\nonumber
	&P_{\mathrm{sop}}=\frac{\bar{R}_\mathrm{t}b'^{-2}}{R_\mathrm{t}\bar{\gamma}_\mathrm{E}\Gamma\left(a'+1\right)}\left(\frac{1}{2\pi j}\right)^2\int_{\mathcal{L}_1}\int_{\mathcal{L}_2}\left(\frac{b\sqrt{\bar{\gamma}_\mathrm{B}}}{\sqrt{\bar{R}_\mathrm{t}}}\right)^{\zeta}\\\nonumber
	&\times\left(\frac{b'^2R_\mathrm{t}\bar{\gamma}_\mathrm{E}}{\bar{R}_\mathrm{t}}\right)^{s} \Gamma\left(\frac{\zeta}{2}+s-1\right) \frac{\Gamma\left(a+1+\zeta\right)\Gamma\left(-\zeta\right)}{\Gamma\left(\frac{\zeta}{2}\right)\Gamma\left(1-\zeta\right)}\\
	&\times\Gamma\left(1-s\right) \Gamma\left(2s+a'-1\right) d\zeta ds.\label{eq-app2}
\end{align}
%\begin{align}\nonumber
%	&P_{\mathrm{sop}}=\frac{\mathcal{E}_\mathrm{E}\bar{R}_\mathrm{t}}{R_\mathrm{t}}\left(\frac{1}{2\pi j}\right)^2\int_{\mathcal{L}_1}\int_{\mathcal{L}_2}\left(\frac{b\sqrt{\bar{\gamma}_\mathrm{B}}}{\sqrt{\bar{R}_\mathrm{t}}}\right)^{\zeta}\left(\frac{R_\mathrm{t}}{\bar{R}_\mathrm{t}\lambda_\mathrm{E}}\right)^{s}\\\nonumber
%	&\times \Gamma\left(\frac{\zeta}{2}+s-1\right) \frac{\Gamma\left(a+1+\zeta\right)\Gamma\left(-\zeta\right)}{\Gamma\left(\frac{\zeta}{2}\right)\Gamma\left(1-\zeta\right)}\Gamma\left(1-s\right)\\
%	&\times \Gamma\left(m_\mathrm{E}-1+s\right)\Gamma\left(1+m_{s_\mathrm{E}}-s\right)d\zeta ds.
%\end{align}
Eventually, by exploiting the  representation of bivariate Fox's H-function in terms of double contour integral \cite[Eq. (2.56)]{mathai2009h}, the proof is completed.  
\subsection{Proof of Theorem 3}\label{app2}
By inserting \eqref{eq-sc} into \eqref{eq-def-sc}, $\bar{C}_\mathrm{s}$ can be expressed as:
\begin{align}\nonumber
	\bar{C}_\mathrm{s}&=\int_{0}^\infty \log_2\left(1+\gamma_\mathrm{B}\right)f_{\gamma_\mathrm{B}}\left(\gamma_\mathrm{B}\right)F_{\gamma_\mathrm{E}}\left(\gamma_\mathrm{B}\right)d\gamma_\mathrm{B}\\\nonumber
	&+\int_{0}^\infty \log_2\left(1+\gamma_\mathrm{E}\right)f_{\gamma_\mathrm{E}}\left(\gamma_\mathrm{E}\right)F_{\gamma_\mathrm{B}}\left(\gamma_\mathrm{E}\right)d\gamma_\mathrm{E}\\\nonumber
	&-\int_{0}^\infty \log_2\left(1+\gamma_{\mathrm{E}}\right)f_\mathrm{E}\left(\gamma_{\mathrm{E}}\right)d\gamma_{\mathrm{E}}\\
	&=I_1+I_2-I_3.\label{eq-I1}
\end{align}
%where 
%\begin{align}
%I_1=\int_{0}^\infty \log_2\left(1+\gamma_\mathrm{B}\right)f_{\gamma_\mathrm{B}}\left(\gamma_\mathrm{B}\right)F_{\gamma_\mathrm{E}}\left(\gamma_\mathrm{B}\right)d\gamma_\mathrm{B}, 
%\end{align}
%\begin{align}
%I_2=\int_{0}^\infty \log_2\left(1+\gamma_\mathrm{E}\right)f_{\gamma_\mathrm{E}}\left(\gamma_\mathrm{E}\right)F_{\gamma_\mathrm{B}}\left(\gamma_\mathrm{E}\right)d\gamma_\mathrm{E}, 
%\end{align}
%\begin{align}
%I_3=\int_{0}^\infty \log_2\left(1+\gamma_{\mathrm{E}}\right)f_\mathrm{E}\left(\gamma_{\mathrm{E}}\right)d\gamma_{\mathrm{E}}.
%\end{align}
Next, by substituting the marginal distributions from \eqref{eq-pdf} and \eqref{eq-cdf} into \eqref{eq-I1}, and re-expressing the logarithm function and the incomplete Gamma function in terms of the Meijer's G-function \cite{prudnikov1990more}, $I_1$ can be determined as:
\begin{align}\nonumber
	I_1&=\frac{1}{2\bar{\gamma}_\mathrm{B}^{(a+1)/2}b^{a+1}\Gamma(a+1)\Gamma(a'+1)\ln2}\\\nonumber
	&\times\int_{0}^\infty G_{2,2}^{1,2}\left(
	\begin{array}{c}
		\gamma_\mathrm{B}\\
	\end{array}\Bigg\vert
	\begin{array}{c}
		(1,1)\\
		(1,0)\\
	\end{array}\right)\gamma_\mathrm{B}^{(a-1)/2}\mathrm{e}^{-\frac{\sqrt{\gamma_\mathrm{B}}}{b\sqrt{\bar{\gamma}_\mathrm{B}}}}\\
	&\times G_{1,2}^{1,1}\left(
	\begin{array}{c}
		\frac{\sqrt{\gamma_\mathrm{B}}}{b'\sqrt{\bar{\gamma}_\mathrm{E}}}\\
	\end{array}\Bigg\vert
	\begin{array}{c}
		1\\
		a'+1,0\\
	\end{array}\right)d\gamma_\mathrm{B},\\\nonumber
	&\overset{(f)}{=}\frac{1}{2\bar{\gamma}_\mathrm{B}^{(a+1)/2}b^{a+1}\Gamma(a+1)\Gamma(a'+1)\ln 2}\\\nonumber
	&\times\frac{1}{2\pi j}\int_{\mathcal{L}_1}\frac{\Gamma\left(1-s\right)\Gamma\left(s\right)\Gamma\left(s\right)}{\Gamma\left(1+s\right)}\\
	&\times\underset{J}{\underbrace{\int_{0}^\infty \hspace{-0.3cm}\gamma_\mathrm{B}^{\frac{a-1}{2}+s}\mathrm{e}^{-\frac{\sqrt{\gamma_\mathrm{B}}}{b\sqrt{\bar{\gamma}_\mathrm{B}}}}
			G_{1,2}^{1,1}\left(
			\begin{array}{c}
				\hspace{-0.3cm}\frac{\sqrt{\gamma_\mathrm{B}}}{b'\sqrt{\bar{\gamma}_\mathrm{E}}}\\
			\end{array}\hspace{-0.1cm}\Bigg\vert
			\begin{array}{c}
				1\\
				a'+1,0\\
			\end{array}\hspace{-0.3cm}\right)d\gamma_\mathrm{B}ds}},\label{eq-I11}
\end{align}
where $(f)$ is obtained by using the definition Meijer's G-function. Then, with the help of \cite[Eq. (2.24.3.1)]{prudnikov1990more}, the inner integral $J$ can be computed as:
\begin{align}
	&J=2\left(b\sqrt{\bar{\gamma}_\mathrm{B}}\right)^{2s+a+1} G_{2,2}^{1,2}\left(
	\begin{array}{c}
		\frac{b\sqrt{\bar{\gamma}_\mathrm{B}}}{b'\sqrt{\bar{\gamma}_\mathrm{E}}}\\
	\end{array}\Bigg\vert
	\begin{array}{c}
		\left(-a-2s,1\right)\\
		\left(a'+1,0\right)\\
	\end{array}\right),\label{eq-j}
\end{align}
By plugging \eqref{eq-j} into \eqref{eq-I11} and exploiting the definition of Meijer's G-function, we have:
\begin{align}\nonumber
	&I_1=\frac{1}{\Gamma(a+1)\Gamma(a'+1)\ln 2}\left(\frac{1}{2\pi j}\right)^2\int_{\mathcal{L}_1}\int_{\mathcal{L}_2}\left(b^2\bar{\gamma}_\mathrm{B}\right)^s\\\nonumber
	&\times\left(\frac{b\sqrt{\bar{\gamma}_\mathrm{B}}}{b'\sqrt{\bar{\gamma}_\mathrm{E}}}\right)^\zeta
	\frac{\Gamma\left(1-s\right)\Gamma\left(s\right)\Gamma\left(s\right)\Gamma\left(a'+1-\zeta\right)\Gamma\left(\zeta\right)}{\Gamma\left(1+s\right)\Gamma\left(1+\zeta\right)}\\
	&\times\Gamma\left(1+a+2s+\zeta\right)d\zeta ds,\label{eq-app5}
\end{align}
Consequently, recognizing the
definition of bivariate Fox's H-function, $I_1$ is derived
as:
\begin{align}
	I_1=&\mathcal{H}H_{1,0:2,2:2,2}^{0,1:1,1:1,2}\left(
	\begin{array}{c}
		\frac{b\sqrt{\bar{\gamma}_\mathrm{B}}}{b'\sqrt{\bar{\gamma}_\mathrm{E}}},b^2\bar{\gamma}_\mathrm{B}\\
	\end{array}\Bigg\vert
	\begin{array}{c}
		\mathcal{U}_1\\
		-\\
	\end{array}\Bigg\vert
	\begin{array}{c}
		\mathcal{U}_2\\
		\mathcal{V}_2
	\end{array}\Bigg\vert
	\begin{array}{c}
		\mathcal{U}_3\\
		\mathcal{V}_3
	\end{array}
	\right),
\end{align}
where $\mathcal{H}=\frac{1}{\Gamma(a+1)\Gamma(a'+1)\ln 2}$, $\mathcal{U}_1=		\left(-a;1,2\right)$, $\mathcal{U}_2=\left(1,1\right)$, $\mathcal{U}_3=\left(1,1\right),\left(1,1\right)$, $\mathcal{V}_2=	\left(a'+1,1\right),\left(0,1\right)$, and $\mathcal{V}_3=		\left(1,1\right), \left(0,1\right)$.

Following the same methodology, $I_2$ can be derived as:
\begin{align}
	&I_2=\int_{0}^\infty \log_2\left(1+\gamma_\mathrm{E}\right)f_{\gamma_\mathrm{E}}\left(\gamma_\mathrm{E}\right)F_{\gamma_\mathrm{B}}\left(\gamma_\mathrm{E}\right)d\gamma_\mathrm{E}\\\nonumber
	%&=\frac{C_\mathrm{E}}{\ln 2}\int_{0}^\infty G_{2,2}^{1,2}\left(
	%\begin{array}{c}
	%		\gamma_\mathrm{E}\\
	%	\end{array}\Bigg\vert
	%	\begin{array}{c}
	%		(1,1)\\
	%		(1,0)\\
	%	\end{array}\right) G_{1,1}^{1,1}\left(
	%	\begin{array}{c}
	%		\lambda_\mathrm{E}\gamma_\mathrm{E}\\
	%	\end{array}\Bigg\vert
	%	\begin{array}{c}
	%		-m_{s_\mathrm{E}}\\
	%		m_\mathrm{E}-1\\
	%	\end{array}\right)G_{1,2}^{1,1}\left(
	%	\begin{array}{c}
	%		\frac{\sqrt{\gamma_\mathrm{E}}}{b\sqrt{\bar{\gamma}_\mathrm{B}}}\\
	%	\end{array}\Bigg\vert
	%	\begin{array}{c}
	%		1\\
	%		a+1,0\\
	%	\end{array}\right)d\gamma_\mathrm{E}\\
	&=\frac{1}{2\bar{\gamma}_\mathrm{E}^{(a'+1)/2}b'^{a+1}\Gamma(a'+1)\Gamma(a+1)\ln 2}\\\nonumber
	&\times\int_0^\infty G_{2,2}^{1,2}\left(
	\begin{array}{c}
		\gamma_\mathrm{E}\\
	\end{array}\Bigg\vert
	\begin{array}{c}
		(1,1)\\
		(1,0)\\
	\end{array}\right)\gamma_\mathrm{E}^{(a'-1)/2}\mathrm{e}^{-\frac{\sqrt{\gamma_\mathrm{E}}}{b'\sqrt{\bar{\gamma}_\mathrm{E}}}}\\
	&\times G_{1,2}^{1,1}\left(
	\begin{array}{c}
		\frac{\sqrt{\gamma_\mathrm{E}}}{b\sqrt{\bar{\gamma}_\mathrm{B}}}\\
	\end{array}\Bigg\vert
	\begin{array}{c}
		1\\
		a+1,0\\
	\end{array}\right)d\gamma_{\mathrm{E}}\\\nonumber
	&\overset{(g)}{=}\frac{1}{2\bar{\gamma}_\mathrm{E}^{(a'+1)/2}b'^{a+1}\Gamma(a'+1)\Gamma(a+1)\ln 2}\\\nonumber
	&\times\frac{1}{2\pi j}\int_{\mathcal{L}_1}\frac{\Gamma\left(1-s\right)\Gamma\left(s\right)\Gamma\left(s\right)}{\Gamma\left(1+s\right)}\\
	&\times\int_0^\infty \hspace{-0.3cm}G_{1,2}^{1,1}\left(
	\begin{array}{c}
		\hspace{-0.2cm}	\frac{\sqrt{\gamma_\mathrm{E}}}{b\sqrt{\bar{\gamma}_\mathrm{B}}}\\
	\end{array}\Bigg\vert
	\begin{array}{c}
		1\\
		a+1,0\\
	\end{array}\right)\gamma_\mathrm{E}^{\frac{a'-1}{2}+s}\mathrm{e}^{-\frac{\sqrt{\gamma_\mathrm{E}}}{b'\sqrt{\bar{\gamma}_\mathrm{E}}}}d\gamma_\mathrm{E}ds,\\\nonumber
	&\overset{(h)}{=}\frac{1}{\Gamma(a'+1)\Gamma(a+1)\ln 2}\frac{1}{2\pi j}\int_{\mathcal{L}_1}\left(b'^{2}\bar{\gamma}_\mathrm{E}\right)^s\\\nonumber
	&\times\frac{\Gamma\left(1-s\right)\Gamma\left(s\right)\Gamma\left(s\right)}{\Gamma\left(1+s\right)}\\
	&\times G_{2,2}^{1,2}\left(
	\begin{array}{c}
		\frac{b'\sqrt{\bar{\gamma}_\mathrm{E}}}{b\sqrt{\bar{\gamma}_\mathrm{B}}}\\
	\end{array}\Bigg\vert
	\begin{array}{c}
		\left(-a'-2s,1\right)\\
		\left(a+1,0\right)\\
	\end{array}\right)ds\\\nonumber
	&\overset{(i)}{=}\frac{1}{\Gamma(a'+1)\Gamma(a+1)\ln 2}\left(\frac{1}{2\pi j}\right)^2\int_{\mathcal{L}_1}\int_{\mathcal{L}_2}\left(b'^{2}\bar{\gamma}_\mathrm{E}\right)^s\\\nonumber
	&\times\left(\frac{b'\sqrt{\bar{\gamma}_\mathrm{E}}}{b\sqrt{\bar{\gamma}_\mathrm{B}}}\right)^\zeta \frac{\Gamma\left(1-s\right)\Gamma\left(s\right)\Gamma\left(s\right)\Gamma\left(a+1-\zeta\right)\Gamma\left(\zeta\right)}{\Gamma\left(1+s\right)\Gamma\left(1+\zeta\right)}\\
	&\times\Gamma\left(1+a'+2s+\zeta\right)d\zeta ds\\
	&\overset{(j)}{=} \mathcal{H}H_{1,0:1,2:2,2}^{0,1:1,1:1,2}\left(
	\begin{array}{c}
		\frac{b'\sqrt{\bar{\gamma}_\mathrm{E}}}{b\sqrt{\bar{\gamma}_\mathrm{B}}},b^2\bar{\gamma}_\mathrm{E}\\
	\end{array}\Bigg\vert
	\begin{array}{c}
		\mathcal{W}_1\\
		-\\
	\end{array}\Bigg\vert
	\begin{array}{c}
		\mathcal{W}_2\\
		\mathcal{X}_2
	\end{array}\Bigg\vert
	\begin{array}{c}
		\mathcal{W}_3\\
		\mathcal{X}_3
	\end{array}
	\right) \label{eq-app7},
\end{align}
where step $(g)$ is derived by exploiting Meijer's G-function definition, $(h)$ is obtained by computing the inner integral with the help of \cite[Eq. (2.24.3.1)]{prudnikov1990more}, $(i)$ is obtained by using Meijer's G-function definition, and $(j)$ is achieved by exploiting the definition of bivariate Fox's H-function. Besides,  $\mathcal{W}_1=\left(-a';1,2\right)$, $\mathcal{W}_2=\left(1,1\right)$, $\mathcal{W}_3=\left(1,1\right),\left(1,1\right)$, $\mathcal{X}_2=	\left(a+1,1\right),\left(0,1\right)$, and $\mathcal{X}_3=		\left(1,1\right), \left(0,1\right)$.
Furthermore, by re-expressing the logarithm function in terms of Meijer's G-function and then using the integral format provided in \cite[Eq. (2.24.3.1)]{prudnikov1990more}, $I_3$ can be  determined as  
\begin{align}
	I_3&=\int_{0}^\infty \log_2\left(1+\gamma_{\mathrm{E}}\right)f_\mathrm{E}\left(\gamma_{\mathrm{E}}\right)d\gamma_{\mathrm{E}}\\ \nonumber
	&=\frac{1}{2\bar{\gamma}_\mathrm{E}^{(a'+1)/2}b'^{a+1}\Gamma(a'+1)\ln 2}\int_0^\infty \gamma_\mathrm{E}^{(a'-1)/2}\\
	&\times\mathrm{e}^{-\frac{\sqrt{\gamma_\mathrm{E}}}{b'\sqrt{\bar{\gamma}_\mathrm{E}}}} G_{2,2}^{1,2}\left(
	\begin{array}{c}
		\gamma_\mathrm{E}\\
	\end{array}\Bigg\vert
	\begin{array}{c}
		(1,1)\\
		(1,0)\\
	\end{array}\right)\\
	&=\mathcal{O} G_{3,2}^{1,3}\left(
	\begin{array}{c}
		4b'^2\bar{\gamma}_\mathrm{E}\\
	\end{array}\Bigg\vert
	\begin{array}{c}
		(-\frac{a'}{2},1,1)\\
		(1,0)\\
	\end{array}\right),
\end{align}
where $\mathcal{O}=\frac{2^{\frac{a'-2}{2}}}{\sqrt{2\pi}\Gamma\left(a'+1\right)\ln 2}$.
\textcolor{blue}{\subsection{Proof of Proposition 1}\label{app3}
	In ths case of $\bar{\gamma}_\mathrm{B}\rightarrow{\infty}$, the 
	bivariate Fox's H-function in \eqref{eq-app2} is evaluated at the highest poles on the left of $\mathcal{L}_2$, i.e., $\zeta=2(1-s)$. Thus, we have the following results for the counter $\mathcal{L}_2$:
	\begin{align}\nonumber
		\mathcal{R}_2=\,&\frac{1}{2\pi j}\int_{\mathcal{L}_2}
		\Gamma\left(\frac{\zeta}{2}+s-1\right) \\
		&\underset{\chi(\zeta)}{\underbrace{\times\frac{\Gamma\left(a+1+\zeta\right)\Gamma\left(-\zeta\right)}{\Gamma\left(\frac{\zeta}{2}\right)\Gamma\left(1-\zeta\right)}
				\left(\frac{b\sqrt{\bar{\gamma}_\mathrm{B}}}{\sqrt{\bar{R}_\mathrm{t}}}\right)^{\zeta}d\zeta}}\\
		&=\mathrm{Res}\left[\chi(\zeta),2(1-s)\right]\\
		&=\lim_{\zeta\rightarrow{2(1-s)}}\left(\frac{\zeta}{2}+s-1\right)\chi(\zeta)\\
		&\hspace{-1.2cm}=\frac{\Gamma\left(a+3-2s\right)\Gamma\left(2s-2\right)}{\Gamma\left(1-s\right)\Gamma\left(2s-1\right)}\left(\frac{b\sqrt{\bar{\gamma}_\mathrm{B}}}{\sqrt{\bar{R}_\mathrm{t}}}\right)^{2(1-s)}.\label{eq-app3}
	\end{align}
	Now, by applying \eqref{eq-app3} to \eqref{eq-app2}, the asymptotic SOP can be determined as:
	\begin{align}\nonumber
		&\hspace{-0.4cm}P_{\mathrm{sop}}^{\mathrm{asy}}=\frac{b^2\bar{\gamma}_\mathrm{B}}{2\pi j R_\mathrm{t}b'^{2}\bar{\gamma}_\mathrm{E}\Gamma\left(a'+1\right)}\int_{\mathcal{L}_1}\left(\frac{b'^2R_\mathrm{t}\bar{\gamma}_\mathrm{E}}{b^2\bar{\gamma}_\mathrm{B}}\right)^{s}\\
		&\hspace{-0.4cm}\times \underset{\chi'(s)}{\underbrace{\frac{\Gamma\left(2s+a'-1\right)\Gamma\left(a+3-2s\right)\Gamma\left(2s-2\right)}{\Gamma\left(2s-1\right)}}}ds\\ \nonumber
		&=\frac{b^2\bar{\gamma}_\mathrm{B}}{ R_\mathrm{t}b'^{2}\bar{\gamma}_\mathrm{E}\Gamma\left(a'+1\right)}\\
		&\hspace{-0.2cm}\times H_{2,2}^{2,1}\left(
		\begin{array}{c}
			\frac{b^2\bar{\gamma}_\mathrm{B}}{b'^2\bar{\gamma}_\mathrm{E}R_\mathrm{t}}\\
		\end{array}\Bigg\vert
		\begin{array}{c}
			(-a-2,2), (-1,2)\\
			(-2,2), (a'-1,2)\\
		\end{array}\right).\label{eq-app4}
	\end{align}
	Then, by computing the highest pole on the right of the contour $\mathcal{L}_2$, i.e., $s=\frac{a+3}{2}$, 
	\begin{align}
		\hspace{-0.2cm}P_{\mathrm{sop}}^{\mathrm{asy}}=\frac{b^2\bar{\gamma}_\mathrm{B}}{R_\mathrm{t}b'^2\bar{\gamma}_\mathrm{E}\Gamma(a'+1)}\mathrm{Res}\left[\chi'(s),\frac{a+3}{2}\right],
	\end{align}
	\eqref{eq-app4} can be asymptotically simplified as \eqref{eq-asy-sop} and the proof is completed.
	\subsection{Proof of  Proposition 2}\label{app4}
	In the case of $\bar{\gamma}_\mathrm{B}\rightarrow\infty$, the asymptotic ASC can be defined as:
	\begin{align}
		\bar{C}_\mathrm{s}^\mathrm{asy}=I'_1+I'_2-I_3, \label{eq-app8}
	\end{align}
	where $I'_1$ can be determined by computing the bivariate Fox's H-function in \eqref{eq-app5} at the poles $s=0$ and $s=-\frac{1+a+\zeta}{2}$ on the left of the contour $\mathcal{L}_1$. Thus, by using the residue theorem, we have: 
	\begin{align}\nonumber
		&\mathcal{G}_1=\frac{1}{2\pi j}\int_{\mathcal{L}_1}\left(b^2\bar{\gamma}_\mathrm{B}\right)^s\\
		&\times\underset{\upsilon(s)}{\underbrace{\frac{\Gamma(1-s)\Gamma(s)\Gamma(s)}{\Gamma(1+s)}\Gamma\left(1+a+2s+\zeta\right)}}ds\\
		&=\mathrm{Res}\left[\upsilon(s),-\frac{1+a+\zeta}{2}\right]+\mathrm{Res}\left[\upsilon(s),0\right],\label{eq-app6}
	\end{align}
	where
	\begin{align}
		&\mathrm{Res}\left[\upsilon(s),-\frac{1+a+\zeta}{2}\right]\\
		&=\lim_{s\rightarrow -\frac{1+a+\zeta}{2}}\left(s+\frac{1+a+\zeta}{2}\right)\upsilon(s)\\
		&=\frac{\Gamma\left(\frac{3+a+\zeta}{2}\right)\Gamma^2\left(-\frac{1+a+\zeta}{2}\right)}{\Gamma\left(\frac{1-a-\zeta}{2}\right)}\left(b^2\bar{\gamma}_\mathrm{B}\right)^{-\left(\frac{1+a+\zeta}{2}\right)},
	\end{align}
	\begin{align}
		\mathrm{Res}\left[\upsilon(s),0\right]=\Gamma\left(1+a+\zeta\right).
	\end{align}
	Now, by inserting the obtained results into \eqref{eq-app6} and then into \eqref{eq-app5}, we have:
	\begin{align}\nonumber
		&I'_1=\frac{\left(b^2\bar{\gamma}_\mathrm{B}\right)^{-\left(\frac{1+a}{2}\right)}}{2\pi j\Gamma(a+1)\Gamma(a'+1)\ln 2}\int_{\mathcal{L}_2}\left(\frac{1}{b'\sqrt{\bar{\gamma}_\mathrm{E}}}\right)^\zeta\\ \nonumber
		&\times
		\frac{\Gamma\left(a'+1-\zeta\right)\Gamma\left(\zeta\right)\Gamma\left(\frac{3+a+\zeta}{2}\right)\Gamma^2\left(-\frac{1+a+\zeta}{2}\right)}{\Gamma\left(1+\zeta\right)\Gamma\left(\frac{1-a-\zeta}{2}\right)}d\zeta\\ \nonumber
		&+\frac{1}{2\pi j\Gamma(a+1)\Gamma(a'+1)\ln 2}\int_{\mathcal{L}_2}\left(\frac{b\sqrt{\bar{\gamma}_\mathrm{B}}}{b'\sqrt{\bar{\gamma}_\mathrm{E}}}\right)^\zeta\\
		&\times
		\frac{\Gamma\left(a'+1-\zeta\right)\Gamma\left(\zeta\right)\Gamma\left(1+a+\zeta\right)}{\Gamma\left(1+\zeta\right)}d\zeta
	\end{align}
	where, by using the definition of the univariate Fox's H-function, the proof of $I'_1$ (i.e., the first term and second term in \eqref{eq-asy-asc}) is completed. Similarly, $I'_2$ (i.e., the third term and the fourth term in \eqref{eq-asy-asc}) can be obtained by computing the bivariate Fox's H-function in \eqref{eq-app7} at the poles $s=0$ and $s=-\frac{1+a'+\zeta}{2}$. Finally, by plugging $I_3$ into \eqref{eq-app8}, the proof of the asymptotic ASC is completed.}
\bibliographystyle{IEEEtran}
\bibliography{main}

%\biographies

%\begin{CCJNLbiography}{photo.eps}{Bei Liu}
%received the B.S. and Ph.D. degrees from Beijing University of Posts and Telecommunications (BUPT) in 2001 and 2006, respectively. He is currently an Associate Professor in School of Electronic Engineering, BUPT. His current research interests include wireless communication theory and technology, wireless mesh networks.
%\end{CCJNLbiography}

%\begin{CCJNLbiography}{photo.eps}{Fei Zhang}
%received the B.S. degree in electronic science and technology from Beijing University of Posts and Telecommunications (BUPT) in 2017. He is currently working towards the master degree in School of Electronic Engineering, BUPT. His current research is signal processing technology for wireless communications.
%\end{CCJNLbiography}

\end{document}